\newcommand{\beqn}{\begin{eqnarray}}
\newcommand{\eeqn}{\end{eqnarray}}
\newtheorem{theorem}{Theorem}
\newtheorem{lemma}{Lemma}
\newtheorem{proposition}{Proposition}
\newtheorem{claim}{Claim}
\newtheorem{remark}{Remark}
\title{Tier-Aware Resource Allocation in OFDMA Macrocell-Small Cell Networks}
\author{Amr Abdelnasser, Ekram Hossain, and Dong In Kim
\thanks{A. Abdelnasser (email: nasra@cc.umanitoba.ca) and E. Hossain (email: Ekram.Hossain@umanitoba.ca) are with the Department of Electrical and Computer Engineering at the University of Manitoba, Canada. D. I. Kim (email: dikim@skku.ac.kr) is with the  School of Information and Communication Engineering  at the Sungkyunkwan University (SKKU), Korea.}
\thanks{ A preliminary version of the paper~\cite{gc14-amr} has been submitted to IEEE Globecom'14.}
}
\begin{document}
\maketitle

\begin{abstract}

We present a joint sub-channel and power allocation framework for downlink transmission an orthogonal frequency-division multiple access (OFDMA)-based cellular network composed of a macrocell overlaid by small cells. In this framework, the resource allocation (RA) problems for both the macrocell and small cells are formulated as optimization problems. For the macrocell, we formulate an RA problem that is aware of the existence of the small cell tier. In this problem, the macrocell performs RA to satisfy the data rate requirements of macro user equipments (MUEs) while maximizing the tolerable interference from the small cell tier on its allocated sub-channels. Although the RA problem for the macrocell is shown to be a mixed integer nonlinear problem (MINLP), we prove that the macrocell can solve another alternate optimization problem that will yield the optimal solution with reduced complexity. For the small cells, following the same idea of tier-awareness, we formulate an optimization problem  that accounts for both RA and admission control (AC) and aims at maximizing the number of admitted users while simultaneously minimizing the consumed bandwidth. Similar to the macrocell optimization problem, the small cell problem is shown to be an MINLP. We obtain an upper bound on the optimal solution with reduced complexity through convex relaxation. In addition, we employ the dual decomposition technique to have a distributed solution for the small cell tier. Numerical results confirm the performance gains of our proposed RA formulation for the macrocell over the traditional resource allocation based on minimizing the transmission power. Besides, it is shown that the formulation based on convex relaxation yields a similar behavior to the MINLP formulation. Also, the distributed solution converges to the same solution obtained by solving the corresponding convex optimization problem in a centralized fashion.

{\em Keywords}:- Base station densification, small cells, OFDMA, downlink resource allocation, sub-channel and power allocation, admission control, convex optimization,  dual decomposition.

\end{abstract}

\section{Introduction}

As more and more customers subscribe to mobile broadband services, there is a tremendous growth in the demand for mobile broadband communications, together with the increased requirements for higher data rates, lower latencies and enhanced quality-of-service (QoS). Fueled by the popularity of smartphones and tablets with powerful multimedia capabilities, services and applications, it is anticipated that by 2020, the existing wireless systems will not be able to accommodate the expected $1000$-fold increase in total mobile broadband data \cite{Erxson}. Therefore, 5G cellular technologies are being sought. Of the several enabling technologies for 5G to handle the expected traffic demand, base station (BS) densification is considered as one of the most promising solutions \cite{Hossain2014}.

BS densification involves the deployment of a large number of low power BSs. This decreases the load per BS and leads to a better link between a user equipment (UE) and its serving BS owing to a smaller distance between them \cite{Bhushan2014}. This BS densification also creates a multi-tier network of nodes with different transmit powers, coverage areas and loads. Two issues arise in such a dense multi-tier network. The first issue is that the resource allocation (RA) in one tier cannot be done in isolation of the resource allocation in another tier. In other words, one tier should take into consideration the consequences of its RA decisions on the other tiers.  The second issue is that centralized RA solutions may not be feasible. Hence, there is a need for decentralized solutions for RA in different network tiers. 


In this paper, we formulate the RA problem for a two-tier orthogonal frequency division multiple access (OFDMA) wireless network composed of a macrocell overlaid by small cells. The objective of the macrocell is to allocate resources to its macro UEs (MUEs) to satisfy their data rate requirements. In addition, knowing about the existence of small cells,  the macrocell allocates the radio resources (i.e., sub-channel and power) to its MUEs in a way that can sustain the highest interference level from the small cells. For this reason, we formulate an optimization problem for the macrocell with an objective that is different from those in the traditional RA problems. Now, since small cells create dead zones around them in the downlink direction, the MUEs should be protected against transmission from the small cells \cite{Hossain2014}, \cite{Lopez2009}. Hence, knowing about the maximum allowable interference levels for MUEs, the small cells perform RA by solving an optimization problem whose objective function combines both the admission control (AC) and the consumed bandwidth (i.e., number of allocated sub-channels). The objective of the small cell tier is to admit as many small cell UEs (SUEs) as possible at their target data rates and consume the minimum amount of bandwidth. Again, this follows the same notion of tier-awareness by leaving as much bandwidth as possible for other network tiers (e.g., for device-to-device (D2D) communication). For this, an optimization problem is formulated for the small cell tier with the aforementioned objective, given the QoS requirements of SUEs and the interference constraints for the MUEs. Dual decomposition is used to have a decentralized RA and AC problem by decomposing the optimization problem into sub-problems for each small cell to solve. For this, only local channel gain information is used along with some coordination with the Home eNB Gateway (HeNB-GW) \cite{3gpp.36.300}. 

The key contributions of this paper can be summarized as follows:

\begin{itemize}

\item We develop a complete framework for tier-aware resource allocation in an OFDMA-based two-tier macrocell-small cell network with new objectives, which are different from the traditional sum-power or sum-rate objectives.

\item For the macrocell tier, we formulate a resource allocation problem that is aware of the existence of the small cell tier and show that it is a mixed integer nonlinear program (MINLP).

\item We prove that the macrocell can solve another alternate optimization problem that yields the optimal solution for the MINLP with polynomial time complexity.

\item We compare the proposed method for the macrocell RA problem to the traditional ``minimize the total sum-power" problem and show that the proposed method outperforms the traditional one in terms of the average number of admitted SUEs.

\item For the small cell tier, we formulate a joint resource allocation and admission control problem that aims at maximizing the number of admitted SUEs and minimizing their bandwidth consumption to accommodate additional tiers, and show that it is an MINLP.

\item We offer an upper bound solution to the MINLP through convex relaxation and propose a solution to the convex relaxation that can be implemented in a distributed fashion using dual decomposition.

\end{itemize}

The rest of this paper is organized as follows. Section \ref{sec:rel} reviews the related work.  Section \ref{sec:model} presents the system model and assumptions for this work. In Section \ref{sec:pbform}, the optimization problems are formulated for both the macrocell tier and the small cell tier, followed by the use of dual decomposition to have a decentralized operation. Numerical results are discussed in Section \ref{sec:numres} and finally Section \ref{sec:conc} concludes the work. A summary of the important symbols and notations used in the paper is given in Table \ref{table_sym}.

\begin{table*}[th]
\centering
\caption{Summary of the important symbols and notations}
\scriptsize
\begin{tabular}{ | l | l |}
\hline
Symbol & Description  \\
\hline 

$B$ & Index of the macrocell \\
$\boldsymbol{d}$ & Sub-gradient \\
$d^n$ & Element of the sub-gradient $\boldsymbol{d}$ \\
$\mathcal{F}$ & Set of all SUEs\\
$F$ & Number of SUEs \\
$f$ & Index of an SUE \\
$\mathcal{F}_s$ & Set of all SUEs served by small cell $s$\\
$g_{i,j}^{n}$ & Channel gain of the link between UE $j$ served by BS $i$ on sub-channel $n$ \\
$g$ & Lagrange dual function \\
$g_s$ & Lagrange dual function for small cell $s$\\
$I_{m}^{n}$ & Maximum tolerable interference level on sub-channel $n$ allocated to MUE $m$ \\
$I_{max}$ & Upper limit on the maximum tolerable interference level $I_m^n$\\
$I_{th}$ & Equal value for the maximum tolerable interference level $I_m^n~, \forall m \in \mathcal{M}, n \in \mathcal{N}$ \\
$I_{th, L}$ and $I_{th, H}$ & Lower and upper limits on $I_{th}$ in the bisection method \\
$I_{th, M}$ & Mean of $I_{th, L}$ and $I_{th, H}$ \\
$\mathcal{L}$ & Lagrangian function\\
$\mathcal{M}$ & Set of all MUEs \\
$M$ & Number of MUEs \\
$m$ & Index of an MUE \\
$\mathcal{N}$ & Set of all available sub-channels \\
$N$ & Number of sub-channels \\
$n$ & Index of a sub-channel \\
$N_{a,c}$ & Number of allocated sub-channels \\
$\mathcal{N}_m$ & Set of all sub-channels allocated to MUE $m$\\
$N_o$ & Noise power \\
$P_{i,j}^{n}$ & Power allocated to the link between UE $j$ served by BS $i$ on sub-channel $n$ \\
$\tilde{P}_{i,j}^{n}$ & Actual power allocated to the link between UE $j$ served by BS $i$ on sub-channel $n$ \\
$P_{B,max}$ & Total macrocell power \\
$P_{s,max}$ & Total small cell power \\
$R_B$ & Coverage radius of macrocell $B$\\
$R_m$ & Data rate requirement of MUE $m$ \\
$R_f$ & Data rate requirement of SUE $f$ \\
$\mathcal{S}$ & Set of all small cells \\
$S$ & Number of small cells \\
$s$ & Index of a small cell \\
$y_{s,f}$ & Admission control variable for SUE $f$ served by small cell $s$ \\
$\alpha^n$ & Scale factor for small cells transmission powers on sub-channel $n$ \\
$\gamma_{B,m}^{n}$ & Received SINR of an MUE $m$ served by macrocell B on sub-channel $n$ \\
$\gamma_{s,f}^{n}$ & Received SINR of an SUE $f$ served by small cell s on sub-channel $n$ \\
$\Gamma_{i,j}^{n}$ & Sub-channel allocation indicator for sub-channel $n$ allocated to UE $j$ served by BS $i$ \\
$\Delta f$ & Sub-channel bandwidth \\
$\delta$ & Termination tolerance in bisection method \\
$\boldsymbol{\eta}$ & Lagrange multiplier associated with the cross-tier interference \\
$\eta^{n}$ & Element of the Lagrange multiplier $\boldsymbol{\eta}$ \\
$\epsilon$ & Weighting factor \\

\hline
\end{tabular}
\label{table_sym}
\end{table*}

\section{Related Work}
\label{sec:rel}

The RA problem in OFDMA-based multi-tier cellular networks has been extensively studied in the literature. The authors in \cite{Kyuho2011} studied the RA problem in a multi-tier cellular network to maximize the sum-throughput subject to simple power budget and sub-channel allocation constraints. However, no QoS constraints were imposed. In \cite{Nguyen2014}, the RA problem in a femtocell network was modeled, with interference constraints for MUEs, in order to achieve fairness among femtocells. No QoS constraints, however, were imposed for femtocell users. In \cite{Guruacharya2013}, the RA problem in a two-tier macrocell-femtocell OFDMA network was modeled as a Stackelberg game, where the macrocell acts as the leader and the femtocells act as the followers. However, no  interference constraints for MUEs were considered. Also, no QoS constraints were imposed for femtocells. Reference \cite{Duy2014} studied the RA problem in a two-tier network composed of macrocells and femtocells which aimed at maximizing the sum-throughput of femtocells subject to total sum-rate constraint for the macrocell. Nevertheless, no QoS constraints were imposed for femtocells. The authors in \cite{Abdelnasser2014} studied the RA problem with QoS and interference constraints in a two-tier cellular network and used clustering as a technique to reduce the overall complexity. 

In the above works, either no QoS constraints were imposed or the RA problems with QoS constraints were assumed feasible. In other words, admission control \cite{Andersin1995}, which is a technique to deal with infeasibility when it is not possible to support all UEs with their target QoS requirements, was not studied. The authors in \cite{Lopez2014} proposed a distributed self-organizing RA scheme for a femtocell only network, with the aim of minimizing the total transmit power subject to QoS constraints. It was shown that minimizing the transmit power (which results in reduced interference) may improve throughput. 


Several works in the literature have considered the admission control problem. For cellular cognitive radio networks, \cite{Tadrous2011} studied the problem of admission and power control to admit the maximum number of secondary links and maximize their sum-throughput subject to QoS requirements and interference constraints for primary links. However, power control was done centrally. The authors in \cite{Mitliagkas2011} considered the problem of admission and power control, where the primary users are guaranteed a premium service rate and the secondary users are admitted (as many as possible) so long as the primary users are not affected. In \cite{Long2008}, the authors proposed a joint rate and power allocation scheme with explicit interference protection for primary users and QoS constraints for secondary users, where admission control was performed centrally. However, \cite{Tadrous2011}-\cite{Long2008} only considered single-channel systems. The authors in \cite{Shin2009} studied the problem of joint rate and power allocation with admission control in an OFDMA-based cognitive radio network subject to QoS requirements for secondary users and interference constraints for primary users. However, resource allocation and admission control were performed centrally. In addition, channels were randomly allocated to secondary users.

In relay networks, \cite{Phan2009} studied the problem of power allocation in amplify and forward wireless relay systems for different objectives, where admission control was employed as a first step preceding power control. However, only one channel was considered. In addition, power and admission control were done centrally. In \cite{Xiaowen2011}, a joint bandwidth and power allocation for wireless multi-user networks with admission control in relay networks was proposed for different system objectives. Unequal chunks of bandwidths were allocated. However, the resource allocation was performed centrally. 

For a two-tier small cell network, \cite{Siew2013} studied joint admission and power control. Small cells are admitted into the network so long as the QoS of macrocell users is not compromised. Admission and power control were performed in a distributed fashion. However, only a single channel system was considered. 
Reference \cite{Namal2010} proposed a distributed admission control mechanism for load balancing among sub-carriers with multiple QoS classes. In addition, small cells mitigate co-tier and cross-tier interferences using slot allocation of different traffic streams among different sub-carriers. However, no power allocation was performed. 

We notice that none of the quoted works considers the interaction between the different network tiers and the consequences of RA decisions of one tier on the other one. In addition, it is desirable to have an RA and AC scheme that is implementable in a distributed fashion in a dense multi-tier OFDMA network. Table \ref{table1} summarizes the  related work and their differences from the work presented in this paper. 


\begin{table*}[th]
\centering
\caption{Summary of Related Work}
\footnotesize
\begin{tabular}{ | c | c | c | c | c | c | c | c |}
\hline
Previous & Network type & Objective function & Multi-       & AC  & QoS               & Distributed & Impact of \\
works     &                        &                               &    channel &        & constraints    & solution &  one tier on \\
               &                       &                                &                 &         &                     &                & another \\
\hline \hline
\cite{Kyuho2011} & Two-tier macrocell/ & Maximize sum-rate & Yes & No & No & Yes & No \\
                              & small cell network   &for two tiers  &  &  &  &  & \\
\hline
\cite{Nguyen2014} &Two-tier macrocell/   &  Maximize sum-min &  Yes & No & For & Yes & No \\
                                &small cell network  & rate for small cells     &        &       & MUEs & & \\
\hline
\cite{Guruacharya2013} & Two-tier macrocell/ &Maximize sum-rate  & Yes & No & For  & Yes & No\\
                                        & small cell network   &for two tiers  &  &  &MUEs  & &  \\
\hline
\cite{Duy2014} &Two-tier macrocell/  & Maximize sum-rate & Yes & No & For  & Yes & No \\
                          & small cell network & for small cells       &  &  & MUEs & & \\
\hline
\cite{Abdelnasser2014} &Two-tier macrocell/  &  Maximize sum-rate & Yes & No & For &  Semi- & No\\
                                       & small cell network & for small cells &  &  & SUEs &  distributed &\\
\hline
\cite{Lopez2014} &Single-tier small  & Minimize sum-power &  Yes& No & Yes & Yes & No \\
                             &cell network  &  &  &  &  &  &\\
\hline
\cite{Tadrous2011} &Cognitive radio & Maximize number of links & No & Yes & Yes & No & No\\
                                & networks & with max sum-rate &  &  &  &  &\\
\hline
\cite{Mitliagkas2011} & Cognitive radio &Maximize number of users  & No & Yes & Yes & Yes & No\\
                                  & networks           &with min sum-power  &  &  &  &  &\\
\hline
\cite{Long2008} &  Cognitive radio &  Maximize min-rate and& No & Yes & Yes & No  & No\\
                           & networks           & maximize sum-log rate &  &  &  &  &\\
\hline
\cite{Shin2009} & Cognitive radio & Maximize sum-rate & Yes & Yes & Yes & No & No\\
                          & networks &  &  &  &  &  &\\
\hline
\cite{Phan2009} & Relay & Maximize min-SINR, min & No & Yes & Yes & No & No\\
                           & networks & max-power and  &  &  &  &  &\\
                           &                 & max sum-rate  &  &  &  &  &\\
\hline
\cite{Xiaowen2011} & Relay & Maximize sum-rate, max & Yes & Yes & Yes & No & No\\
                                & networks & min-rate and  &  &  &  &  &\\
                                &                 & minimize sum-power  &  &  &  &  &\\
\hline
\cite{Siew2013} & Two-tier macrocell/ & Minimize sum-power with& No & Yes & For MUEs & Yes & No\\
                           &small cell network  &  max number of SUEs &  &  & and SUEs &  &\\
\hline
\cite{Namal2010} & Two-tier macrocell/ & Maximize product of &Yes  & Yes & For SUEs & Yes & No\\
                             & small cell network & minimum of (2$\times$target rate &  &  &  &  &\\
		        &                             & - achieved rate) and achieved rate&  &  &  &  &\\
\hline
Our  & Two-tier macrocell/ & Maximize sum-tolerable & Yes &  Yes & For MUEs & Yes & Yes\\
            proposed    & small cell network  & interference for MUEs    &  &  &and SUEs  &  &\\
          scheme      &                              & and maximize admitted SUEs   &  &  &  &  &\\
                &                              &  with minimum bandwidth  &  &  &  &  &\\
\hline
\end{tabular}
\label{table1}
\end{table*}

\section{System Model, Assumptions, and Resource Allocation Framework}
\label{sec:model}

\subsection{System Model and Assumptions}


We consider the downlink of a two-tier cellular network, where a single macrocell, referred to by the index $B$ and with coverage radius $R_B$, is overlaid with $S$ small cells.
Denote by $\mathcal{S}$ the set of small cells, where $S=|\mathcal{S}|$. A closed-access scheme is assumed for all small cells, where access to a small cell is restricted only to the registered SUEs.  All small cells  are connected to the mobile core network. For example, femtocells can connect to the core network by using the DSL or CATV modems via an intermediate entity called the Femto Gateway (FGW) or HeNB-GW \cite{3gpp.36.300} which can take part in the resource allocation operation for femtocells. 

We denote by $\mathcal{M}$ the set of MUEs served by the macrocell $B$ with $M=|\mathcal{M}|$. Each MUE $m$ has a data rate requirement of $R_m$. In addition, denote by $\mathcal{F}$ the set of SUEs in the system with $F=|\mathcal{F}|$.  Each SUE $f$ has a data rate requirement of $R_f$. We refer to the set of SUEs served by small cell $s$ by $\mathcal{F}_s$. We assume that all UEs are already associated with their BSs and that this association remains fixed during the runtime of the resource allocation process. We have $\bigcup_{s=1}^{S}\mathcal{F}_s=\mathcal{F}$ and $\bigcap_{s=1}^{S}\mathcal{F}_s=\phi $. All MUEs exist outdoor and all SUEs exist indoor. We have an OFDMA system, where we denote by $\mathcal{N}$ the set of available sub-channels with $N=|\mathcal{N}|$ and $\Delta f$ is the bandwidth of a sub-channel $n$. Universal frequency reuse is assumed, where the macrocell and all the small cells have access to the set of sub-channels $\mathcal{N}$. $\Gamma _{i,j}^{n}$ is the sub-channel allocation indicator, i.e., $\Gamma _{i,j}^{n}=1$, if sub-channel $n$ is allocated to UE $j$ served by BS $i$ and takes the value of $0$ otherwise. 

The UEs are capable of using two modes of sub-channel allocation, namely, the exclusive mode and the time sharing mode.  For the exclusive mode, in a given transmission frame, sub-channel $n$ is used by one UE only. In the time sharing mode, a sub-channel $n$ is allocated to a certain UE a portion of the time. In this way, multiple UEs can time share a sub-channel $n$ in a given transmission frame \cite{Lataief99}. 

Denote by $P_{i,j}^{n}$ and $g _{i,j}^{n}$ the allocated power to and the channel gain of the link between BS $i$ and UE $j$ on sub-channel $n$. Channel gains are time varying and account for path-loss, log-normal shadowing, and fast fading. The channel gains are assumed to remain static during the resource allocation process. The received signal to interference plus noise ratio (SINR) $\gamma_{B,m}^{n}$ of an MUE $m$ served by macrocell $B$ on a sub-channel $n$ is defined as:

\begin{equation}
\label{sinr_macro}
\gamma_{B,m}^{n}=\frac{P_{B, m}^{n}g_{B, m}^{n}}{I_{m}^{n}+N_o}
\end{equation}
where $I_{m}^{n}$ is the maximum tolerable interference level at MUE $m$ on sub-channel $n$ and $N_o$ is the noise power. According to (\ref{sinr_macro}), the following constraint holds for small cell transmission powers on sub-channel $n$:

\begin{equation}
\Gamma _{B,m}^{n}\left(\sum_{s=1}^{S} \sum_{f \in \mathcal{F}_s}^{ } \Gamma _{s,f}^{n}P _{s,f}^{n}g_{s,m}^{n} \right ) \leq \Gamma _{B,m}^{n} I_{m}^{n}
\end{equation}
where the constraint is active only if sub-channel $n$ is allocated to MUE $m$, i.e., $\Gamma _{B,m}^{n}=1$. Similarly, we can define the received SINR $\gamma_{s,f}^{n}$ of an SUE $f$ served by small cell $s$ on a sub-channel $n$ as:

\begin{equation}
\label{sinr_femto}
\gamma_{s,f}^{n}=\frac{P_{s,f}^{n}g_{s, f}^{n}}{\sum_{m=1}^{M}\Gamma _{B,m}^{n}P_{B,m}^{n}g_{B, f}^{n}+N_o}.
\end{equation}

In (\ref{sinr_femto}), we consider cross-tier interference from macrocell $B$.\footnote{Note that it is straightforward to account for interference from other macrocells in (\ref{sinr_macro}) and (\ref{sinr_femto}). Nevertheless, since we are focusing on the interaction between RA decisions of the macrocell tier represented by macrocell $B$ and the overlaying small cells tier, interference from other macrocells will appear as a constant term in (\ref{sinr_macro}) and (\ref{sinr_femto}). Besides, $I_{m}^{n}$ will represent the maximum tolerable interference from the small cell tier.} On the other hand, co-tier interference from other small cells is assumed to be a part of the noise power $N_o$ due to the wall penetration loss and their relatively low transmission powers \cite{Zhang2012}.

\subsection{Tier-Aware Resource Allocation Framework}

Fig. \ref{RAflow} describes the RA framework proposed in this paper. Given the rate requirements for the MUEs, the macrocell starts by allocating resources to its MUEs and specifies the maximum tolerable interference levels on each allocated sub-channel. The macrocell then sends this RA information to the HeNB-GW which broadcasts it to the small cells. The small cells then perform RA and AC for its SUEs. For the resulting resource allocation for small cells, the MUEs perform interference measurements and report them to the macrocell BS. The macrocell BS then updates the HeNB-GW and the cycle repeats until the interference thresholds for all the MUEs are not violated and the RA and AC converge for all small cells. This cycle repeats due to the distributed nature of RA and AC in small cells. This repetition, however, does not take place if RA and AC in small cells are performed by a central controller. Note that the resource allocation in the macrocell from the first step remains fixed throughout the entire operation of the resource allocation process in the macrocell and small cells. The  awareness of the macrocell about the small cell tier is reflected in the way the radio resources are allocated in the macrocell. The macrocell allocates resources to its MUEs in a way that can tolerate the maximum interference possible from the samll cell tier. Note, however, that the minimum rate constraints of all MUEs must be satisfied in the sense that the rate requirement for none of the MUEs is compromised for admitting new SUEs. On the other hand, the awareness of the small cell tier about the existence of other tiers is reflected in the fact that the resource allocation in the small cell tier satisfies the rate requirements of the SUEs using the minimum amount of  bandwidth resources.

\begin{figure}[t]
\begin{center}
\includegraphics[width=3.5 in]{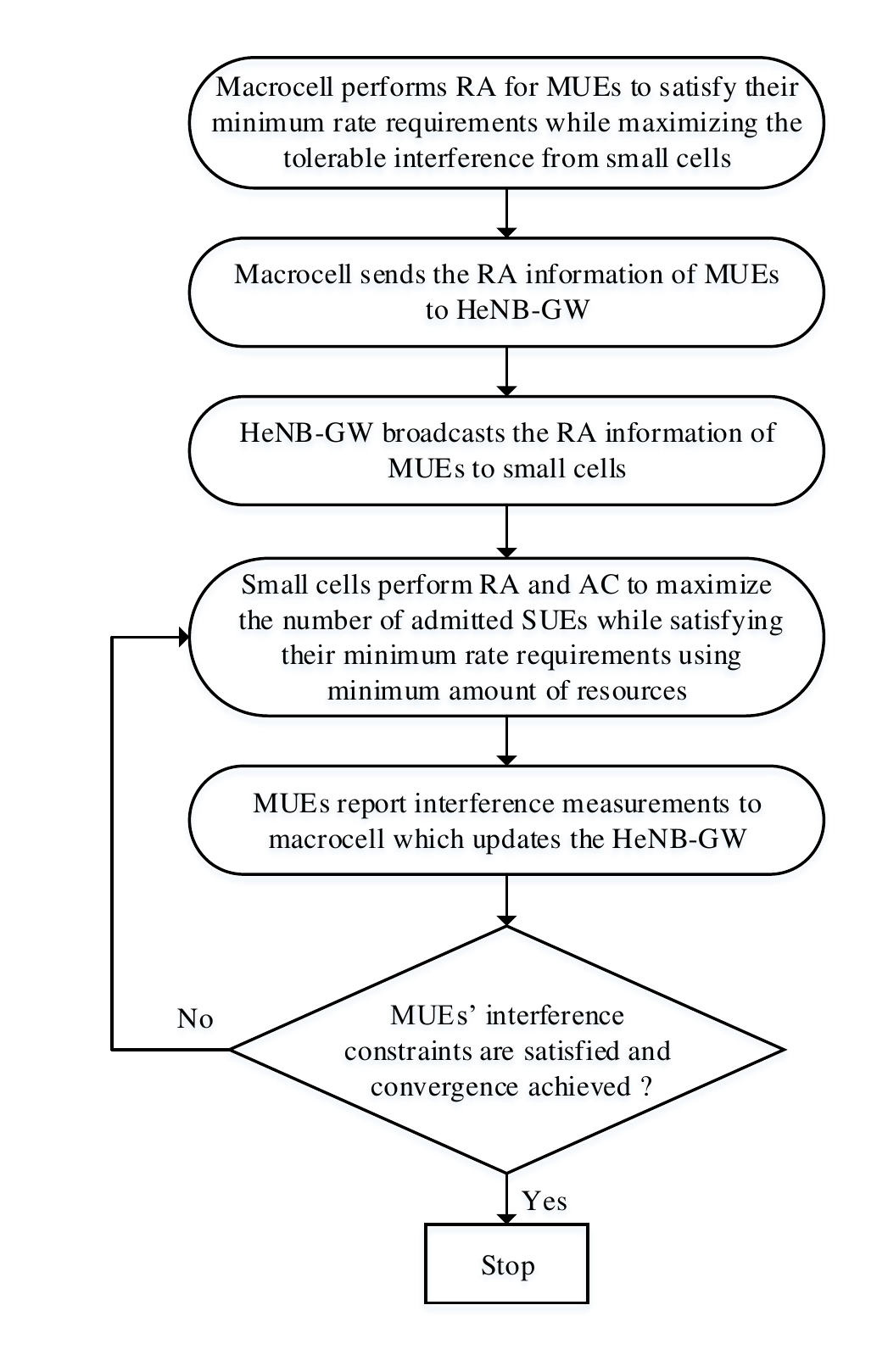}
\caption [c]{The RA framework for the macrocell and the small cells.}
\label{RAflow}
\end{center}
\end{figure}

\section{Problem Formulations for Resource Allocation}
\label{sec:pbform}

\subsection{Problem Formulation for Macrocell}

The macrocell is responsible for providing the basic coverage for the MUEs \cite{Astely2013}. Hence, the target of the macrocell is to allocate the resources to its MUEs to satisfy their data rate requirements and specify the maximum tolerable interference level by its MUEs on the allocated sub-channels. Different methods (i.e., corresponding to optimization problems for resource allocation with different objective functions), as will be shown later, can be followed to accomplish this task. It is of interest to study and understand the effects of different RA methods on other network tiers, which will be the small cell tier in our case.

\subsubsection{Maximize the sum of tolerable interference levels}

One way of performing RA in the macrocell is to allocate resources to the MUEs in a way that maximizes the sum of the maximum tolerable interference levels on the allocated sub-channels. The motivation behind this objective is to allow the maximum  possible freedom for the small cell tier in using the sub-channels. In this context, equal transmit power is assumed on the allocated sub-channels in the macrocell, i.e., $P_{B, m}^{n}=\frac{P_{B, max}}{N_{ac}}$ \cite{Hongjia2011}, where $P_{B, max}$ is the total macrocell power and $N_{ac} \leq N$ is the number of allocated sub-channels. Denote by $\mathcal{N}_m$ the set of sub-channels allocated to MUE $m$. Hence, we can define the optimization problem in (\ref{macro_pb3}), where the objective is to maximize the sum of the tolerable interference levels $I_{m}^{n}$ for all MUEs $m$ on all sub-channels $n$. C1 is the data rate constraint for each MUE $m$.  C2 and C3 indicate that the sets of sub-channels allocated to the MUEs are disjoint and constitute the entire set of sub-channels $\mathcal{N}$. C4 is a constraint added for numerical purposes, where $I_{max}$, is a very large number and $I_m^n=I_{max}$ means that sub-channel $n$ is not actually allocated to MUE $m$. Finally, C5 indicates that $I_{m}^{n}$ should be positive.

\begin{align}
\label{macro_pb3}
&\max_{\left \{ I_{m}^{n} \right \}}\sum_{m=1}^{M}\sum_{n\in \mathcal{N}_m}^{ }I_{m}^{n}  \nonumber \\
&{\text{subject}}~{\text{to}} \nonumber \\
& \mbox{C1}: \sum_{n \in \mathcal{N}_m}^{  }\Delta f \log_{2} \left(1+\frac{P_{B,m}^{n}g_{B,m}^{n}}{I_{m}^{n}+N_{o}} \right) \geq R_{m}, \forall m \in \mathcal{M}\nonumber \\
& \mbox{C2}: \mathcal{N}_i\bigcap_{ }^{ }\mathcal{N}_j=\varnothing, ~\forall i, j \in \mathcal{M} \nonumber \\
& \mbox{C3}: \bigcup_{m=1}^{M}\mathcal{N}_{m}\subseteq \left \{ 1,2,...,N \right \} \nonumber \\
& \mbox{C4}: I_{m}^{n}\leq I_{max}, \forall m \in \mathcal{M}, n \in \mathcal{N} \nonumber \\
& \mbox{C5}: I_{m}^{n} \geq 0, \forall m \in \mathcal{M}, n \in \mathcal{N}.
\end{align}

In general, (\ref{macro_pb3}) is an MINLP whose feasible set is non-convex due to C1 and the combinatorial nature of sub-channel allocation. Besides, $P_{B,m}^{n}$ is unknown as the number of allocated sub-channels $N_{ac}$ is not known yet. However, by carefully inspecting (\ref{macro_pb3}), some interesting features can be revealed which lead to the possibility of obtaining the optimal solution of (\ref{macro_pb3}) with polynomial time complexity. We shall assume first that (\ref{macro_pb3}) is always feasible and that in the extreme case, an MUE can have its rate requirement satisfied with one sub-channel only. The last assumption is possible thanks to the fact that the macrocell in our model has a control on the maximum interference level on the allocated sub-channel. The following Lemmas reveal some of the interesting features of (\ref{macro_pb3}).

\begin{lemma}
\label{lemma1}
At optimality, all data rate constraints C1 hold with equality.
\end{lemma}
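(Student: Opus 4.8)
The plan is to argue by contradiction using a local perturbation of a single interference variable. I would suppose that at some optimal solution $\{I_m^n\}$ there is at least one MUE, say $m_0$, for which C1 is \emph{strict}, i.e. the achieved rate $\sum_{n \in \mathcal{N}_{m_0}} \Delta f \log_2(1 + P_{B,m_0}^n g_{B,m_0}^n / (I_{m_0}^n + N_o))$ strictly exceeds $R_{m_0}$, and derive a contradiction by exhibiting a feasible point with strictly larger objective value.

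The two monotonicity facts that drive everything are: (i) the objective $\sum_{m=1}^{M}\sum_{n \in \mathcal{N}_m} I_m^n$ is strictly increasing in each decision variable $I_m^n$; and (ii) for a fixed sub-channel allocation (hence fixed $P_{B,m}^n$, since $N_{ac}$ is determined by the allocation and is unchanged by the perturbation I will perform), the left-hand side of C1 is continuous and strictly decreasing in each $I_{m_0}^n$ for $n \in \mathcal{N}_{m_0}$. Thus raising an interference level both improves the objective and relaxes the rate, so the only things that can prevent an increase are the upper cap C4 or the rate floor C1 itself.

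Concretely, I would first establish that $\mathcal{N}_{m_0}$ contains a sub-channel $n_0$ with $I_{m_0}^{n_0} < I_{max}$. This is where I expect the only real friction: a priori all allocated sub-channels of $m_0$ could sit at the cap $I_{max}$. I would rule this out using the standing feasibility assumption stated just before the lemma --- that each MUE's rate requirement can be met on a single sub-channel --- together with $R_{m_0} > 0$: if every $I_{m_0}^n$ equalled $I_{max}$, those sub-channels would be effectively unallocated (per the convention that $I_m^n = I_{max}$ signals non-allocation), forcing the achieved rate to be negligible and violating C1, contrary to feasibility. Given such an $n_0$, I choose the perturbation $I_{m_0}^{n_0} \mapsto I_{m_0}^{n_0} + \delta$ with $\delta > 0$ small and leave all other variables fixed.

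Finally I would verify that for all sufficiently small $\delta$ the perturbed point is feasible and strictly better: C4 and C5 hold because $I_{m_0}^{n_0}$ stays in $[0, I_{max}]$; C1 for $m_0$ still holds by continuity, since the rate was strictly above $R_{m_0}$; C1 for every other MUE is untouched because C2 makes the allocation sets disjoint, so no other rate expression involves $I_{m_0}^{n_0}$; and C2, C3 are unaffected since the perturbation changes no allocation indicator. Yet the objective increases by exactly $\delta > 0$, contradicting optimality. Hence C1 must bind for every MUE.
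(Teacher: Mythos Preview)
Your proof is correct and follows essentially the same idea as the paper's: both rest on the observation that the objective is increasing in each $I_m^n$ while the left-hand side of C1 is decreasing in each $I_m^n$. The paper compresses this into a single sentence, whereas you spell out the contradiction via an explicit local perturbation and take care of the boundary case $I_{m_0}^{n_0} = I_{max}$, which the paper's one-line argument simply omits; your treatment of that case (using that $I_{max}$ is ``very large'' so the rate at the cap is negligible compared to $R_{m_0}>0$) is consistent with the paper's conventions.
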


\begin{proof}
Since the objective function in (\ref{macro_pb3}) is monotonically increasing in $I_m^n$ and C1 is monotonically decreasing in $I_m^n$, C1 must hold with equality at optimality for all MUEs. 
\end{proof}

\begin{lemma}
\label{lemma2}
At optimality, each MUE $m$ is assigned a single sub-channel $i$ with $I_m^i < I_{max}$.
\end{lemma}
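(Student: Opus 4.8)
The plan is to argue by a contradiction-and-exchange argument that exploits the very large constant $I_{max}$ in C4. First I would record the two structural facts that drive everything. By the feasibility assumption stated just before the lemmas, each MUE can meet its rate requirement $R_m$ on a single sub-channel; and a sub-channel carrying essentially no data is exactly one with $I_m^n = I_{max}$, since the SINR term inside C1 vanishes as $I_m^n \to \infty$. Because $I_{max}$ is chosen very large, such a sub-channel contributes the maximal term $I_{max}$ to the objective of (\ref{macro_pb3}), so the optimizer is pushed to keep the number of genuinely rate-carrying sub-channels (those with $I_m^n < I_{max}$) per MUE as small as possible.

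Second, I would establish the lower bound ``at least one''. If MUE $m$ owned no sub-channel with $I_m^n < I_{max}$, then every sub-channel in $\mathcal{N}_m$ yields a vanishing rate and C1 is violated (as $R_m>0$). Hence every MUE keeps at least one rate-carrying sub-channel.

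Third, which is the core step, I would prove ``at most one'' by a local improvement. Suppose at an optimal point some MUE $m$ uses two or more sub-channels with $I_m^n < I_{max}$. I would construct an alternative feasible point in which the whole requirement $R_m$ is carried on one of them (possible by the single-sub-channel feasibility assumption, with $I_m^i < I_{max}$ and C5 respected), while the freed sub-channel(s) are set to $I_{max}$. Since $I_{max}$ exceeds any interference attainable on a rate-carrying sub-channel, each freed sub-channel gains essentially $I_{max}-I_m^n>0$ in the objective. Moreover, collapsing $m$ onto one sub-channel lowers the count $N_{ac}$ of allocated sub-channels, which through $P_{B,m}^n = P_{B,max}/N_{ac}$ raises the per-sub-channel power and, via the tightness of C1 established in Lemma \ref{lemma1}, can only increase the admissible $I_m^n$ of every remaining rate-carrying sub-channel. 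Both effects strictly increase the objective, contradicting optimality. Combining the lower and upper bounds yields exactly one sub-channel $i$ with $I_m^i < I_{max}$ per MUE.

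The step I expect to be the main obstacle is making the ``dominance of $I_{max}$'' rigorous without circular reasoning about which sub-channels count as allocated. I would argue that $I_{max}$ can be taken larger than the largest interference any rate-carrying sub-channel can tolerate (bounded by $P_{B,max}\,g_{B,m}^{n}/(2^{R_m/\Delta f}-1) - N_o$ when C1 is tight), so that exchanging a rate-carrying sub-channel for an $I_{max}$-valued one is always a strict net gain; I would also verify that the coupling of all MUEs' powers through the shared $N_{ac}$ reinforces rather than opposes the improvement, so the exchange is unambiguously beneficial.
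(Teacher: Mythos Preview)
Your proposal is correct and follows essentially the same exchange-and-contradiction argument as the paper: assume an MUE carries rate on more than one sub-channel, then push all but one of its interference levels to $I_{max}$ while concentrating the entire rate on a single sub-channel, strictly raising the objective. The paper's version is terser---it works within a fixed $\mathcal{N}_m$, continuously decreasing one $I_m^i$ while raising the remaining $I_m^j$ to $I_{max}$, and compares $(|\mathcal{N}_m|-1)I_{max}+I_m^{i\ast}$ to $\sum_n I_m^n$---and does not explicitly address the lower-bound (``at least one'') case, the $N_{ac}$ power coupling, or the quantitative choice of $I_{max}$ that you (rightly) flag as needing care.
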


\begin{proof}
To establish this result, we assume that for an MUE $m$ at optimality, $I_m^n < I_{max},~ \forall n \in \mathcal{N}_m$ with an objective function value  $\textup{Obj}_m=\sum_{n \in \mathcal{N}_m}^{ }I_m^n$ for MUE $m$. However, according to Lemma~\ref{lemma1}, the objective function is monotonically increasing in $I_m^n$, whereas the constraint C1 is monotonically decreasing in $I_m^n$. Therefore, we can decrease the value of $I_m^i$ on a certain sub-channel $i \in \mathcal{N}_m$ and increase the values of all other $I_m^j, j \in \mathcal{N}_m, j \neq i$. In this way, we end up with $I_m^j=I_{max}$, for $j \in \mathcal{N}_m, j \neq i$. Meanwhile, $I_m^i$ reaches a value $I_m^{i \ast}$ such that the rate constraint for MUE $m$ is met with equality resulting in a new objective function value ${\textup{Obj}}'_m=\left ( \left(|\mathcal{N}_m|-1\right)I_{max}+I_m^{i \ast} \right )$ which is clearly higher than $\textup{Obj}_m$. Hence, the initial assumption of optimality is contradicted.  
\end{proof}

Recall that, a value of $I_{max}$ for a certain $I_m^n$ means that sub-channel $n$ is not actually allocated to MUE $m$. This leads to the fact that a system with $N$ sub-channels and $M$ MUEs, where $M \leq N$, will end up with $M$ sub-channels only allocated to the $M$ MUEs which leads to a minimal use of the available system bandwidth. Hence, $P_{B, m}^{n}=\frac{P_{B, max}}{N_{ac}}=\frac{P_{B, max}}{M}$.

\begin{lemma}
\label{lemma3}
The allocated sub-channel $i$ for MUE $m$ is the one with the highest channel gain $g_{B,m}^{i}, i \in \mathcal{N}_m$.
\end{lemma}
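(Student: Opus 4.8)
The plan is to combine Lemmas~\ref{lemma1} and~\ref{lemma2} with a direct exchange argument. By Lemma~\ref{lemma2}, each MUE $m$ is allocated exactly one ``real'' sub-channel, call it $i$, on which $I_m^i < I_{max}$; on this sub-channel the rate constraint C1 holds with equality by Lemma~\ref{lemma1}. Since a single sub-channel carries the entire rate requirement, the equality version of C1 reads $\Delta f \log_{2}\!\left(1+\frac{P_{B,m}^{i}g_{B,m}^{i}}{I_m^i+N_o}\right)=R_m$, with $P_{B,m}^{i}=\frac{P_{B,max}}{M}$ fixed. The first step is to solve this equation explicitly for $I_m^i$, yielding
\begin{equation}
I_m^i=\frac{P_{B,max}\,g_{B,m}^{i}/M}{2^{R_m/\Delta f}-1}-N_o.
\end{equation}

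The key observation is that in this closed form, the only quantity depending on the choice of sub-channel $i$ is the channel gain $g_{B,m}^{i}$ in the numerator; every other factor ($P_{B,max}$, $M$, $R_m$, $\Delta f$, $N_o$) is a constant independent of $i$. Hence $I_m^i$ is a strictly increasing function of $g_{B,m}^{i}$. The second step is therefore immediate: since the objective in (\ref{macro_pb3}) contributes exactly $I_m^i$ for MUE $m$, maximizing the objective over the single-sub-channel choice amounts to maximizing $g_{B,m}^{i}$, so the optimal allocation assigns to MUE $m$ the sub-channel with the highest channel gain.

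To make the argument airtight I would frame it as a contradiction/exchange: suppose at optimality MUE $m$ were assigned a sub-channel $i$ that is not the one of maximum gain, and let $i'$ be the (currently unused, by Lemma~\ref{lemma2} and the counting argument that exactly $M$ sub-channels are used) sub-channel with $g_{B,m}^{i'}>g_{B,m}^{i}$. Reassigning $m$ to $i'$ and recomputing $I_m^{i'}$ from the equality in C1 strictly increases $I_m$ by the monotonicity just established, while leaving all other MUEs' contributions unchanged and keeping C2--C5 satisfied (the swap preserves disjointness and uses a sub-channel that was otherwise idle). This strictly increases the objective, contradicting optimality.

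The one subtlety worth guarding against — the main obstacle — is the availability of the higher-gain sub-channel for the swap. One must verify that the target sub-channel $i'$ can indeed be freed for MUE $m$: either $i'$ is unallocated, or if it currently serves another MUE, a careful simultaneous reassignment (e.g.\ a matching/ordering argument over the $M$ gains) is needed so that no rate constraint is violated and disjointness C2 is maintained. The cleanest route is to invoke the counting fact established after Lemma~\ref{lemma2}, namely that exactly $M$ of the $N\ge M$ sub-channels are used, so each MUE can in principle be matched to its individually best sub-channel; any residual conflict (two MUEs preferring the same channel) does not affect the per-MUE statement of the Lemma, which only asserts that $m$'s \emph{allocated} channel is its highest-gain one among those it could receive, and follows directly from the monotonicity of $I_m^i$ in $g_{B,m}^{i}$.
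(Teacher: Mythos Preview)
Your proof is correct and follows essentially the same route as the paper: invoke Lemmas~\ref{lemma1} and~\ref{lemma2} to reduce to a single active sub-channel, solve the equality form of C1 for $I_m^{i}$ to obtain $I_m^{i}=\frac{P_{B,m}^{i}g_{B,m}^{i}}{2^{R_m/\Delta f}-1}-N_o$, and conclude by the monotonicity of $I_m^{i}$ in $g_{B,m}^{i}$. Your availability/exchange discussion is unnecessary here, since the lemma is stated for $i\in\mathcal{N}_m$ (i.e., within the set already assigned to MUE $m$); the global matching question is deferred to the theorem that follows.
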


\begin{proof}
According to Lemma \ref{lemma2}, at optimality, ${\textup{Obj}}'_m=\left ( \left(|\mathcal{N}_m|-1\right)I_{max}+I_m^{i \ast} \right )$ for MUE $m$, where $I_m^{i \ast}$ is selected such that the achieved data rate on sub-channel $i$ is equal to $R_m$. Hence, from the rate constraint formula, $I_m^{i \ast}=\left(\frac{P_{B,m}^{i}g_{B,m}^{i}}{2^{R_m/\Delta f}-1}-N_o \right)$. It is clear that $I_m^{i \ast}$ is directly proportional to $g_{B,m}^{i}$. Therefore, to maximize ${\textup{Obj}}'_m$, we need to maximize $I_m^{i \ast}$. Hence, MUE $m$ should be allocated sub-channel $i$ such that $i= \arg \underset{n \in \mathcal{N}_m}{\max }g_{B,m}^{n}$.  
\end{proof}

Based on the given Lemmas, we have the following Theorem.

\begin{theorem}
To maximize the sum of tolerable interference levels, the macrocell can solve the following alternate optimization problem:
\begin{align}
\label{macro_pb4}
&\max_{\left \{ \Gamma_{B,m}^{n} \right \}}\sum_{m=1}^{M}\sum_{n=1}^{N}\Gamma_{B,m}^{n}g_{B,m}^{n}  \nonumber \\
&{\textup{subject}}~{\textup{to}} \nonumber \\
& \mbox{C1}: \sum_{n=1}^{N}\Gamma_{B,m}^{n}=1,~\forall m \in \mathcal{M} \nonumber \\
& \mbox{C2}: \sum_{m=1}^{M}\Gamma_{B,m}^{n}\leq 1, ~\forall n \in \mathcal{N} \nonumber \\
& \mbox{C3}: \Gamma_{B,m}^{n}\in \left\{0,1 \right\},~\forall m \in \mathcal{M},n \in \mathcal{N} \nonumber \\
\end{align}
\end{theorem}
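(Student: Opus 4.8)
The plan is to use Lemmas~\ref{lemma1}--\ref{lemma3} to collapse the continuous, combinatorial problem (\ref{macro_pb3}) into a pure one-to-one assignment of MUEs to sub-channels, and then to argue that this assignment problem is exactly (\ref{macro_pb4}) and is solvable in polynomial time. First I would use Lemma~\ref{lemma2} to argue that any optimal solution of (\ref{macro_pb3}) allocates exactly one ``real'' sub-channel to each MUE, the remaining entries being fixed at $I_{max}$, i.e., unallocated. This immediately pins the number of allocated sub-channels at $N_{ac}=M$, and hence the per-sub-channel transmit power at the constant $P_{B,m}^{n}=P_{B,max}/M$, removing the circular dependence on the unknown $N_{ac}$ that made (\ref{macro_pb3}) appear ill-posed.

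Next I would invoke Lemma~\ref{lemma1} to replace the inequality rate constraint C1 by an equality on the single allocated sub-channel $i$ of each MUE $m$, which yields the closed form $I_m^{i\ast}=\frac{(P_{B,max}/M)\,g_{B,m}^{i}}{2^{R_m/\Delta f}-1}-N_o$ already derived in the proof of Lemma~\ref{lemma3}. Substituting this into the objective of (\ref{macro_pb3}) and discarding the additive constant $-M N_o$ and the common positive factor $P_{B,max}/M$, the objective becomes a sum, over the chosen assignment, of the quantities $\frac{g_{B,m}^{i}}{2^{R_m/\Delta f}-1}$. Encoding the assignment through the binary indicators $\Gamma_{B,m}^{n}$, the ``one real sub-channel per MUE'' property becomes C1 of (\ref{macro_pb4}), the disjointness constraints C2--C3 of (\ref{macro_pb3}) become C2 of (\ref{macro_pb4}), and integrality gives C3. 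This establishes that the two feasible sets coincide and that the objective reduces to a linear assignment objective.

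The step I expect to be the main obstacle is this objective reduction, for two reasons. First, Lemma~\ref{lemma3} only selects the best sub-channel for a MUE \emph{given a fixed partition} $\{\mathcal{N}_m\}$; since the partition is itself a decision variable and the $\mathcal{N}_m$ must be disjoint, two MUEs may share the same maximizing sub-channel, so one cannot simply hand each MUE its globally best channel. The correct claim is therefore that the joint optimization over partitions is a maximum-weight bipartite matching, and I would make this rigorous by noting that the constraint matrix of (\ref{macro_pb4}) is totally unimodular, so the LP relaxation of C3 has integral vertices and the problem is solved in polynomial time (e.g., by the Hungarian algorithm). Second, I would reconcile the per-MUE weight $\frac{1}{2^{R_m/\Delta f}-1}$ with the plain gain $g_{B,m}^{n}$ appearing in (\ref{macro_pb4}): when all rate requirements $R_m$ are equal this factor is a common constant that drops out and (\ref{macro_pb4}) follows verbatim, whereas in the general case the same matching argument applies to the weighted gains $\frac{g_{B,m}^{n}}{2^{R_m/\Delta f}-1}$, which I would flag as the version one should actually solve.
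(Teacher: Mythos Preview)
Your proposal is correct and follows essentially the same route as the paper's own proof, which likewise invokes Lemmas~\ref{lemma2} and~\ref{lemma3} to reduce (\ref{macro_pb3}) to a one-sub-channel-per-MUE assignment and then appeals to the Hungarian method for (\ref{macro_pb4}). You are, however, more careful than the paper on two points it glosses over: first, that Lemma~\ref{lemma3} only picks the best sub-channel \emph{within a fixed block} $\mathcal{N}_m$, so a genuine bipartite-matching argument is required once the partition itself is a decision variable; and second, that for heterogeneous rate targets $R_m$ the correct assignment weights are $g_{B,m}^{n}/(2^{R_m/\Delta f}-1)$ rather than the plain gains $g_{B,m}^{n}$ used in (\ref{macro_pb4}) --- this last observation is a real caveat to the theorem as stated, and your flagged ``weighted'' version is the one that is actually equivalent to (\ref{macro_pb3}).
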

where the objective in (\ref{macro_pb4}) is to maximize the sum of the allocated sub-channel gains. C1 restricts the number of allocated sub-channels to any MUE $m$ to one sub-channel only, whereas C2 restricts sub-channel $n$ to be allocated to at most one MUE. Then for each MUE $m$ with allocated sub-channel $n$, the maximum tolerable interference level is given by: $I_m^{n}=\left(\frac{P_{B,m}^{n}g_{B,m}^{n}}{2^{R_m/\Delta f}-1}-N_o \right)$. Hence, (\ref{macro_pb3}) is solved optimally.

\begin{proof}
According to Lemmas \ref{lemma2} and \ref{lemma3}, at optimality, each MUE will have only one sub-channel which is the one with the highest gain.
Hence, we can define the optimization problem in (\ref{macro_pb4}) which is the well-known assignment problem that can be efficiently solved in polynomial time using the Hungarian method \cite{Wayne2003}. 
\end{proof}

In this way, the macrocell allocates sub-channels to its MUEs in a way that satisfies their data rate requirements and that can tolerate the maximum possible interference from the small cell tier.

\subsubsection{Minimize the total sum-power}

As a baseline, we consider another way of performing RA in the macrocell by minimizing the total sum-power of the macrocell given the data rate requirements of the MUEs. This problem has been studied extensively in the literature \cite{Kibeom2006}. However, the formulation developed in \cite{Kibeom2006} does not account for the maximum tolerable interference level $I_m^n$. Hence, we include it here with the required modification to determine the maximum tolerable interference level $I_m^n$. We have, thus, the following optimization problem:

\begin{align}
\label{macro_pb5}
&\min_{\left \{ P_{B,m}^{n} \right \}}\sum_{m=1}^{M}\sum_{n\in \mathcal{N}_m}^{ }P_{B,m}^{n}  \nonumber \\
&{\text{subject}}~{\text{to}} \nonumber \\
& \mbox{C1}: \sum_{n \in \mathcal{N}_m}^{  }\Delta f \log_{2} \left(1+\frac{P_{B,m}^{n}g_{B,m}^{n}}{I_{m}^{n}+N_{o}} \right) \geq R_{m}, \forall m \in \mathcal{M}\nonumber \\
& \mbox{C2}: \mathcal{N}_i\bigcap_{ }^{ }\mathcal{N}_j=\varnothing, ~\forall i, j \in \mathcal{M} \nonumber \\
& \mbox{C3}: \bigcup_{m=1}^{M}\mathcal{N}_{m}\subseteq \left \{ 1,2,...,N \right \} \nonumber \\
& \mbox{C4}: P_{B,m}^{n}\geq 0, \forall m \in \mathcal{M}, n \in \mathcal{N}. \nonumber\\
\end{align}

In (\ref{macro_pb5}), given the maximum tolerable interference level on each allocated sub-channel $I_{m}^{n}$, the macrocell seeks a power and sub-channel allocation solution that minimizes the sum-power. Although (\ref{macro_pb5}) is an MINLP whose feasible set is non-convex, it has been solved efficiently in \cite{Kibeom2006} in the dual domain using dual decomposition relying on the fact that the duality gap becomes virtually zero when the number of sub-channels in the system is sufficiently large. 

\begin{remark}
The reason for choosing the resource allocation solution based on the formulation in (\ref{macro_pb5}) as the  baseline is the following. With this solution, at optimality, all MUEs have their rate requirements satisfied with equality. This is also the case for the solution obtained from the formulation  in (\ref{macro_pb3}). In other words, with both the solutions, the MUEs achieve the same performance. The difference however lies in the way the resources are allocated, which subsequently  impacts  the performance of the small cell tier.

 
\end{remark}

In (\ref{macro_pb5}), the same value of $I_{m}^{n}$ is assumed $\forall m \in \mathcal{M}, n \in \mathcal{N}$, i.e., $I_{m}^{n}=I_{th}$. For a further fair comparison between (\ref{macro_pb3}) and (\ref{macro_pb5}), the macrocell adjusts the maximum tolerable interference level $I_{th}$ such that $\sum_{m=1}^{M}\sum_{n\in \mathcal{N}_m}^{ }P_{B,m}^{n}=P_{B, max}$. This can be accomplished by using the bisection method according to \textbf{Algorithm \ref{Alg_2}} as given below, where $I_{th,H} > I_{th,L}$.

\begin{algorithm}
\caption{Bisection method to find optimal $I_{th}$}
\begin{algorithmic}[1]
\STATE Macrocell initializes $I_{th,L}$,  $I_{th,H}$, and $\delta $
\WHILE{$|\sum_{m=1}^{M}\sum_{n\in \mathcal{N}_m}^{ }P_{B,m}^{n}-P_{B, max}|>\delta $}
\STATE $I_{th, M}=\left(I_{th,L}+I_{th,H}\right)/2$
\STATE Macrocell solves the optimization problem in (\ref{macro_pb5})
\IF{$\sum_{m=1}^{M}\sum_{n\in \mathcal{N}_m}^{ }P_{B,m}^{n}>P_{B, max}$}
\STATE $I_{th,H}=I_{th, M}$
\ELSIF{$\sum_{m=1}^{M}\sum_{n\in \mathcal{N}_m}^{ }P_{B,m}^{n}<P_{B, max}$}
\STATE $I_{th,L}=I_{th, M}$
\ENDIF
\ENDWHILE
\end{algorithmic}
 \label{Alg_2}
\end{algorithm}

After \textbf{Algorithm \ref{Alg_2}} terminates, $I_{th, M}$ gives the optimal value of  $I_{th}$. The optimization problem in (\ref{macro_pb5}) ends up with the power and sub-channel allocation to the MUEs with a uniform maximum tolerable interference level $I_{th}$ on all allocated sub-channels. In general, as will be shown in the numerical results, (\ref{macro_pb5}) leads to a higher number of allocated sub-channels to the MUEs than that (\ref{macro_pb3}) does. It is of interest to study the effect of the two different RA methods on the small cell tier.



\subsection{Problem Formulation for Small Cells}

Due to the small distance and the good channel conditions between small cells and SUEs, small cells are capable of serving registered SUEs with higher data rates than the macrocell. However, this should not be at the cost of QoS degradation at MUEs as they are served by the macrocell and provided with basic coverage at possibly lower rates \cite{Duy2014}. Hence, given the maximum tolerable interference levels on each allocated sub-channel for the MUEs, each small cell now tries to admit as many SUEs as possible at their target data rate by using the minimum possible bandwidth. Again, the idea here is to leave as much bandwidth as possible for the other network tiers (e.g., for device-to-device (D2D) communication).

\subsubsection{Centralized operation}

To accomplish the aforementioned requirements, we define the optimization problem in (\ref{femto_pb1}), where the objective function accounts for both admission control and sub-channel allocation. We have the admission control variable $y_{s,f}$ which takes the value of $1$ if SUE $f$ is admitted in small cell $s$ and $0$ otherwise. By controlling the weighting factor $\epsilon \in [0, 1]$, admission control can be given higher priority over the number of used sub-channels.

\begin{align}
\label{femto_pb1}
&\max_{\left \{ \Gamma _{s,f}^{n}, P_{s,f}^{n}, y_{s,f} \right \}}  \left(1-\epsilon \right )\sum_{s=1}^{S}\sum_{f \in \mathcal{F}_s}^{ }y_{s,f} - \epsilon \sum_{s=1}^{S}\sum_{f \in \mathcal{F}_s}^{ }\sum_{n=1}^{N}\Gamma _{s,f}^{n}  \nonumber \\
&{\text{subject}}~{\text{to}} \nonumber \\
& \mbox{C1}:\sum_{n=1}^{N}\Delta f \log_2\left(1+\frac{P_{s,f}^{n}g_{s,f}^{n}}{\sum_{m=1}^{M}\Gamma _{B,m}^{n}P_{B,m}^{n}g_{B,f}^{n}+N_o} \right )  \nonumber \\
& \hspace{120 pt} \geq y_{s,f} R_f,~\forall s \in \mathcal{S}, f \in \mathcal{F}_s \nonumber \\
& \mbox{C2}:\sum_{f \in \mathcal{F}_s}^{ }\sum_{n=1}^{N}P_{s,f}^{n} \leq P_{s, max},~\forall s \in \mathcal{S} \nonumber \\
& \mbox{C3}:\Gamma_{B,m}^{n}\left(\sum_{s=1}^{S}\sum_{f \in \mathcal{F}_s}^{ }P_{s,f}^{n}g_{s,m}^{n} \right )\leq \Gamma_{B,m}^{n}I_{m}^{n},~ \forall n \in \mathcal{N} \nonumber \\
& \mbox{C4}:P_{s,f}^{n} \leq \Gamma _{s,f}^{n}P_{s, max} ~\forall s \in \mathcal{S}, f \in \mathcal{F}_s, n \in \mathcal{N} \nonumber \\
& \mbox{C5}:\sum_{f \in \mathcal{F}_s}^{ }\Gamma _{s,f}^{n} \leq 1,~\forall s \in \mathcal{S}, n \in \mathcal{N} \nonumber \\
& \mbox{C6}:P_{s,f}^{n} \geq 0,~\forall s \in \mathcal{S}, f \in \mathcal{F}_s, n \in \mathcal{N} \nonumber \\
& \mbox{C7}:\Gamma _{s,f}^{n} \in \left\{0,1\right\},~\forall s \in \mathcal{S}, f \in \mathcal{F}_s, n \in \mathcal{N} \nonumber \\
& \mbox{C8}:y_{s,f} \in \left\{0,1\right\},~\forall s \in \mathcal{S}, f \in \mathcal{F}_s, n \in \mathcal{N}.
\end{align}

\begin{proposition}
\label{prop1}
By choosing $\epsilon < \frac{1}{1+SN}$, (\ref{femto_pb1}) admits the maximum number of SUEs while consuming the minimum number of sub-channels.
\end{proposition}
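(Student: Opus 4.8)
The plan is to treat the objective as a scalarization of a two-level (lexicographic) optimization in which admission is the primary goal and bandwidth consumption the secondary one, and to show that the prescribed bound on $\epsilon$ makes this scalarization exact. The two quantities appearing in the objective, namely the admission count $\sum_{s=1}^{S}\sum_{f\in\mathcal{F}_s} y_{s,f}$ and the bandwidth count $\sum_{s=1}^{S}\sum_{f\in\mathcal{F}_s}\sum_{n=1}^{N}\Gamma_{s,f}^{n}$, are both nonnegative integers by C7 and C8. First I would record a uniform upper bound on the bandwidth term: constraint C5 forces each sub-channel of each small cell to be assigned to at most one SUE, so every small cell contributes at most $N$ to the sum, and hence $\sum_{s,f,n}\Gamma_{s,f}^{n}\le SN$ over all $S$ small cells.

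The core step is a pairwise comparison of feasible points. Suppose one feasible solution admits strictly more SUEs than another; by integrality of the admission count the gap in the admission term is at least $1$. Writing the difference of the two objective values, the admission part contributes at least $(1-\epsilon)\cdot 1$, while the bandwidth part can penalize the solution admitting more SUEs by at most $\epsilon\,SN$, since the difference of the two bandwidth counts is at most $SN-0=SN$ by the bound just established. Hence the objective difference is at least $(1-\epsilon)-\epsilon SN = 1-\epsilon(1+SN)$, which is strictly positive precisely when $\epsilon<\frac{1}{1+SN}$. Therefore, under this choice of $\epsilon$, any feasible solution admitting more SUEs strictly dominates one admitting fewer, so every optimizer of (\ref{femto_pb1}) attains the maximum achievable number of admitted SUEs.

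Finally, I would fix the optimizer's admission count at this maximal value; the first term of the objective then becomes a constant, and maximizing the remaining $-\epsilon\sum_{s,f,n}\Gamma_{s,f}^{n}$ with $\epsilon>0$ is equivalent to minimizing the number of allocated sub-channels among all solutions achieving the maximal admission count. This delivers both conclusions of the proposition at once.

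I expect the only delicate point to be the bandwidth bound: one must invoke C5 (at most one SUE per sub-channel per small cell), rather than any per-SUE constraint, to guarantee $\sum_{s,f,n}\Gamma_{s,f}^{n}\le SN$, and one must keep the inequality strict by exploiting the integrality of the admission count so that its gap is at least $1$. Everything else is a routine scalarization estimate; a useful remark is that because the comparison is made against an arbitrary competitor, the second term alone also resolves ties among solutions that admit equally many SUEs.
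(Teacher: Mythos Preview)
Your argument is correct and essentially identical to the paper's: both compare two feasible solutions whose admission counts differ by at least one, bound the bandwidth term by $SN$, and observe that $(1-\epsilon)-\epsilon SN>0$ exactly when $\epsilon<\frac{1}{1+SN}$, then reduce to minimizing sub-channels once the admission count is fixed at its maximum. Your framing as a lexicographic scalarization and your explicit appeal to C5 and to the integrality of $\sum y_{s,f}$ are a bit more careful than the paper's contradiction phrasing, but the substance is the same.
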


\begin{proof}
This Proposition can be proved in a way similar to that in \cite{Karipidis2008}. Let $\left(\Gamma _{s,f}^{n \ast}, P_{s,f}^{n \ast}, y_{s,f}^{\ast}\right)$, $\forall s \in \mathcal{S}, f \in \mathcal{F}_s, n \in \mathcal{N}$ denote an optimal solution of (\ref{femto_pb1}). Let $\left(\hat{\Gamma _{s,f}^{n }},\hat{ P_{s,f}^{n }}, \hat{y_{s,f}} \right)$, $\forall s \in \mathcal{S}, f \in \mathcal{F}_s, n \in \mathcal{N}$ be a feasible solution that admits one more SUE than the optimal solution, i.e., $\sum_{s=1}^{S}\sum_{f \in \mathcal{F}_s}^{ }\hat{y_{s,f}}=\sum_{s=1}^{S}\sum_{f \in \mathcal{F}_s}^{ }y_{s,f}^{\ast}+1$.
The objective of the feasible solution can be written as:
\begin{align}
& \left(1-\epsilon \right )\sum_{s=1}^{S}\sum_{f \in \mathcal{F}_s}^{ }\hat{y_{s,f}} - \epsilon \sum_{s=1}^{S}\sum_{f \in \mathcal{F}_s}^{ }\sum_{n=1}^{N}\hat{\Gamma _{s,f}^{n}} \overset{(1)}{\geq} \nonumber 
\end{align}
\begin{align}
& \left(1-\epsilon \right )\sum_{s=1}^{S}\sum_{f \in \mathcal{F}_s}^{ }y_{s,f}^{\ast}+\left(1-\epsilon \right )-\epsilon S N \overset{(2)}{\geq} \nonumber 
\end{align}
\begin{align}
& \left(1-\epsilon \right )\sum_{s=1}^{S}\sum_{f \in \mathcal{F}_s}^{ }y_{s,f}^{\ast} \overset{(3)}{\geq} \nonumber 
\end{align}
\begin{align}
& \left(1-\epsilon \right )\sum_{s=1}^{S}\sum_{f \in \mathcal{F}_s}^{ }y_{s,f}^{\ast} - \epsilon \sum_{s=1}^{S}\sum_{f \in \mathcal{F}_s}^{ }\sum_{n=1}^{N}\Gamma _{s,f}^{n \ast}. \nonumber 
\end{align}
The first inequality holds due to the fact that $\sum_{s=1}^{S}\sum_{f \in \mathcal{F}_s}^{ }\sum_{n=1}^{N}\hat{\Gamma _{s,f}^{n}}$ is upper bounded by $S N$ when all sub-channels in all small cells are allocated. The second inequality holds by setting $\left(1-\epsilon \right )-\epsilon S N > 0$. Hence, we have $\epsilon < \frac{1}{1+SN}$. The last inequality holds due to the non-negativity of $\sum_{s=1}^{S}\sum_{f \in \mathcal{F}_s}^{ }\sum_{n=1}^{N}\Gamma _{s,f}^{n \ast}$. In this way, the value of the objective function for the feasible solution is higher than the optimal one, which contradicts the optimality of $\left(\Gamma _{s,f}^{n \ast}, P_{s,f}^{n \ast}, y_{s,f}^{\ast}\right)$. Thus, there is no other solution that admits a higher number of SUEs under the constraint in (\ref{femto_pb1}). Given the optimum value for the admission control variable $y_{s,f}^{\ast}$, (\ref{femto_pb1}) reduces to a feasible sub-channel and power allocation problem with respect to the variables $\Gamma _{s,f}^{n}$ and  $P_{s,f}^{n}$ that aims at minimizing the number of used sub-channels subject to the given constraints.
\end{proof}

In (\ref{femto_pb1}), C1 is a data rate constraint for an SUE $f$ which is active only if SUE $f$ is admitted, i.e., $y_{s,f}=1$. C2 is the power budget constraint for each small cell $s$ restricting the total transmission power of small cell $s$ to be less than or equal to $P_{s, max}$. C3 is a constraint on the maximum cross-tier interference introduced to MUE $m$ using sub-channel $n$. C4 ensures that if sub-channel $n$ is not allocated to SUE $f$, its corresponding transmit power $P_{s,f}^{n}=0$. C5 constrains sub-channel $n$ to be allocated to at most one SUE $f$ in small cell $s$. C6 ensures that the power $P_{s,f}^{n}$ should be positive, and finally, C7 and C8 indicate that $\Gamma_{s,f}^{n}$ and $y_{s,f}$ are binary variables.

\begin{claim}
The optimization problem in (\ref{femto_pb1}) is always feasible.
\end{claim}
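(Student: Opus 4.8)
The plan is to prove feasibility by exhibiting a single explicit feasible point rather than by characterizing the whole feasible set. The natural candidate is the \emph{admit-nobody} allocation, in which every small cell serves none of its SUEs. Concretely, I would set $y_{s,f}=0$, $\Gamma_{s,f}^{n}=0$, and $P_{s,f}^{n}=0$ for all $s \in \mathcal{S}$, $f \in \mathcal{F}_s$, and $n \in \mathcal{N}$, and then verify that this assignment satisfies every constraint C1--C8 of (\ref{femto_pb1}).

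The first step is to check the rate constraint C1, which is the only nonlinear and potentially restrictive condition. With $y_{s,f}=0$ its right-hand side $y_{s,f}R_f$ equals $0$, while the left-hand side is a sum of terms $\Delta f \log_2(1+\cdot)$ whose arguments are exactly $1$ (since $P_{s,f}^{n}=0$ makes each SINR term vanish, giving $\log_2(1)=0$). Hence C1 reduces to $0 \geq 0$ and holds. I would then dispatch the remaining constraints, all of which become trivial under the zero assignment: C2 reads $0 \leq P_{s,max}$; C4 reads $0 \leq \Gamma_{s,f}^{n}P_{s,max}=0$; C5 reads $0 \leq 1$; and C6, C7, C8 are immediate since $0$ is non-negative and lies in $\{0,1\}$.

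The one step that requires a small observation is the cross-tier interference constraint C3. Under the zero assignment its left-hand side $\Gamma_{B,m}^{n}\left(\sum_{s}\sum_{f}P_{s,f}^{n}g_{s,m}^{n}\right)$ equals $0$, so C3 becomes $0 \leq \Gamma_{B,m}^{n}I_{m}^{n}$; this holds because the interference thresholds $I_{m}^{n}$ supplied by the macrocell are non-negative (constraint C5 of the macrocell problem (\ref{macro_pb3})) and $\Gamma_{B,m}^{n}\in\{0,1\}$. I do not anticipate any genuine obstacle here: the argument succeeds precisely because the admission variables $y_{s,f}$ can be switched off, which decouples the rate requirements from the power budget and interference limits and leaves the empty allocation feasible by construction. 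This also matches the intuition that a small cell can always elect to admit no users, so the problem can never be infeasible.
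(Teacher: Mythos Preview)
Your proposal is correct and takes essentially the same approach as the paper: both exhibit the trivial all-zero assignment $\Gamma_{s,f}^{n}=0$, $P_{s,f}^{n}=0$, $y_{s,f}=0$ as a feasible point. Your version is simply more explicit in walking through each constraint, whereas the paper states the trivial solution in one line.
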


\begin{proof}
A trivial feasible solution of (\ref{femto_pb1}) is $\Gamma _{s,f}^{n}=0, P_{s,f}^{n}=0$ and $y_{s,f}=0,~\forall s \in \mathcal{S}, f \in \mathcal{F}_s, n \in \mathcal{N}$. 
\end{proof}

The problem in (\ref{femto_pb1}) is an MINLP whose feasible set is non-convex due to the combinatorial nature of sub-channel allocation and admission control. However, for small-sized problems, we use OPTI \cite{Jonathan2012}, which is a MATLAB toolbox to construct and solve linear, nonlinear, continuous and discrete optimization problems, to obtain the optimal solution. Obtaining the optimal solution, however, for larger problems is intractable. Another approach that can render the problem in (\ref{femto_pb1}) more tractable is to have a convex reformulation of (\ref{femto_pb1}) by relaxing the constraints C7 and C8 and allowing $\Gamma_{s,f}^{n}$ and $y_{s,f}$ to take any value in the range $[0, 1]$. Thus, $\Gamma_{s,f}^{n}$ is now a time sharing factor that indicates the portion of time sub-channel $n$ is allocated to SUE $f$ \cite{Zukang2005}, \cite{Tao2008}, whereas $y_{s,f}$ indicates the ratio of the achieved data rate for SUE $f$. Hence, we define the convex optimization problem in (\ref{femto_pbcnvx}), where $\tilde{P}_{s,f}^{n}$ can be related to $P_{s,f}^{n}$ in (\ref{femto_pb1}) as $\tilde{P}_{s,f}^{n}=\Gamma_{s,f}^{n} P_{s,f}^{n}$ to denote the actual transmit power \cite{Lataief99}. Now, (\ref{femto_pbcnvx}) is a convex optimization problem with a linear objective function and convex feasible set. It can be solved efficiently by the interior point method \cite{Boyd2004}.

Note that the system model assumed in (\ref{femto_pbcnvx}) differs from the original one in (\ref{femto_pb1}) as it allows time sharing of sub-channels and fractional satisfaction of the required data rates. Hence, the solution of (\ref{femto_pbcnvx}) gives an upper bound to the optimal solution of (\ref{femto_pb1}). However, it  helps by revealing some insights about the behavior of (\ref{femto_pb1}). Note that the solution of (\ref{femto_pbcnvx}) necessitates the existence of a central controller which can be, for example, the HeNB-GW. However, for dense small cell networks, having a decentralized solution with some coordination with a central entity will be a more viable option.

\begin{align}
\label{femto_pbcnvx}
&\max_{\left \{ \Gamma _{s,f}^{n}, \tilde{P}_{s,f}^{n}, y_{s,f} \right \}} \left(1-\epsilon \right )\sum_{s=1}^{S}\sum_{f \in \mathcal{F}_s}^{ }y_{s,f} - \epsilon \sum_{s=1}^{S}\sum_{f \in \mathcal{F}_s}^{ }\sum_{n=1}^{N}\Gamma _{s,f}^{n}  \nonumber \\
&{\text{subject}}~{\text{to}} \nonumber \\
& \mbox{C1}:\sum_{n=1}^{N}\Gamma _{s,f}^{n}\Delta f \log_2\left(1+\frac{\left(\tilde{P}_{s,f}^{n}g_{s,f}^{n}/\Gamma _{s,f}^{n} \right )}{\sum_{m=1}^{M}\Gamma _{B,m}^{n}P_{B,m}^{n}g_{B,f}^{n}+N_o} \right )   \nonumber \\
& \hspace{120 pt} \geq y_{s,f} R_f,~\forall s \in \mathcal{S}, f \in \mathcal{F}_s \nonumber \\
& \mbox{C2}:\sum_{f \in \mathcal{F}_s}^{ }\sum_{n=1}^{N}\tilde{P}_{s,f}^{n} \leq P_{s, max},~\forall s \in \mathcal{S} \nonumber \\
& \mbox{C3}:\Gamma_{B,m}^{n}\left(\sum_{s=1}^{S}\sum_{f \in \mathcal{F}_s}^{ }\tilde{P}_{s,f}^{n}g_{s,m}^{n} \right )\leq \Gamma_{B,m}^{n}I_{m}^{n},~ \forall n \in \mathcal{N} \nonumber \\
& \mbox{C4}:\sum_{f \in \mathcal{F}_s}^{ }\Gamma _{s,f}^{n} \leq 1,~\forall s \in \mathcal{S}, n \in \mathcal{N} \nonumber \\
& \mbox{C5}:\tilde{P}_{s,f}^{n} \geq 0,~\forall s \in \mathcal{S}, f \in \mathcal{F}_s, n \in \mathcal{N} \nonumber \\
& \mbox{C6}:\Gamma _{s,f}^{n} \in (0,1],~\forall s \in \mathcal{S}, f \in \mathcal{F}_s, n \in \mathcal{N} \nonumber \\
& \mbox{C7}:y_{s,f} \in \left[0,1\right],~\forall s \in \mathcal{S}, f \in \mathcal{F}_s, n \in \mathcal{N}. \nonumber \\
\end{align}

\subsubsection{Distributed operation}
\label{sec:DO}
To fulfill the requirement of having a decentralized solution for (\ref{femto_pbcnvx}), we use the dual decomposition method \cite{Boyd2007decomp}. For this purpose, we define the following partial Lagrangian function of the primal problem in (\ref{femto_pbcnvx}) formed by dualizing the constraint C3:

\begin{align}
\label{lag_femto_pb}
&\mathcal{L}\left(\Gamma _{s,f}^{n},\tilde{ P}_{s,f}^{n}, y_{s,f}, \boldsymbol{\eta}   \right ) \nonumber \\
&=\left(1-\epsilon \right )\sum_{s=1}^{S}\sum_{f \in \mathcal{F}_s}^{ }y_{s,f}-\epsilon \sum_{s=1}^{S}\sum_{f \in \mathcal{F}_s}^{ }\sum_{n=1}^{N}\Gamma _{s,f}^{n} \nonumber \\
&+ \sum_{n=1}^{N}\eta^{n}\left(\Gamma_{B,m}^{n}I_{m}^{n}-\Gamma_{B,m}^{n}\left(\sum_{s=1}^{S}\sum_{f \in \mathcal{F}_s}^{ }\tilde{P}_{s,f}^{n}g_{s,m}^{n} \right ) \right)
\end{align}
where $\boldsymbol{\eta}$ is the Lagrange multiplier vector (with elements $\eta^n$) associated with the cross-tier interference constraint C3. Then the Lagrange dual function is represented as

\begin{align}
\label{lagdulfn}
&g(\boldsymbol{\eta})=\max_{\left \{\Gamma _{s,f}^{n}, \tilde{P}_{s,f}^{n}, y_{s,f}  \right \}}\mathcal{L}\left(\Gamma _{s,f}^{n},\tilde{ P}_{s,f}^{n}, y_{s,f}, \boldsymbol{\eta}   \right ) \nonumber \\
&{\text{subject}}~{\text{to}} \nonumber \\
&\mbox{C1},~ \mbox{C2},~ \mbox{C4}-\mbox{C7}.
\end{align}

From (\ref{lag_femto_pb}), the maximization of $\mathcal{L}$ can be decomposed into $S$ independent optimization problems for each small cell $s$ as follows:
\begin{align}
\label{lagdulfns}
&g_s(\boldsymbol{\eta})=\max_{\left \{ \Gamma _{s,f}^{n}, \tilde{P}_{s,f}^{n}, y_{s,f} \right \}}\left(1-\epsilon \right )\sum_{f \in \mathcal{F}_s}^{ }y_{s,f}-\epsilon \sum_{f \in \mathcal{F}_s}^{ }\sum_{n=1}^{N}\Gamma _{s,f}^{n}\nonumber \\
&\hspace{85 pt}-\sum_{f \in \mathcal{F}_s}^{ }\sum_{n=1}^{N}\eta^{n}\Gamma_{B,m}^{n}\tilde{P}_{s,f}^{n}g_{s,m}^{n} \nonumber \\
&{\text{subject}}~{\text{to}} \nonumber \\
&\mbox{C1},~ \mbox{C2},~ \mbox{C4}-\mbox{C7}, \quad \forall s \in S.
\end{align}
Thus, the Lagrange dual function is
\begin{align}
&g(\boldsymbol{\eta})=\sum_{s=1}^{S}g_s(\boldsymbol{\eta})+\sum_{n=1}^{N}\eta^{n}\Gamma_{B,m}^{n}I_{m}^{n}.
\end{align}

Then, the dual problem is given by:
\begin{align}
\label{lagdualpb}
&\min_{\boldsymbol{\eta} \geq 0}g(\boldsymbol{\eta}). 
\end{align}

In order to solve the dual problem,  $\boldsymbol{\eta}$ can be updated efficiently  using the ellipsoid method \cite{Boyd2008ellipse}. A sub-gradient $\boldsymbol{d}$ of this problem required for the ellipsoid method is derived in the following proposition.

\begin{proposition}
\label{prop2}
For the optimization problem in (\ref{femto_pbcnvx}) with a dual objective $g(\boldsymbol{\eta})$ defined in (\ref{lagdulfn}), the following choice of $d^{n}$ is a sub-gradient for $g(\boldsymbol{\eta})$:
\begin{align}
\label{subgrad}
&d^{n}=\Gamma_{B,m}^{n}I_{m}^{n}-\Gamma_{B,m}^{n}\left(\sum_{s=1}^{S}\sum_{f \in \mathcal{F}_s}^{ }\tilde{P}_{s,f}^{n\ast }g_{s,m}^{n} \right )
\end{align}
\end{proposition}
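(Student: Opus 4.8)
The plan is to exploit the fact that, for every fixed feasible choice of the primal variables, the partial Lagrangian in (\ref{lag_femto_pb}) is an \emph{affine} function of the dual variable $\boldsymbol{\eta}$. Consequently, the dual function $g(\boldsymbol{\eta})$ defined in (\ref{lagdulfn}) is a pointwise maximum of affine functions of $\boldsymbol{\eta}$, hence convex, and a standard result from convex analysis identifies a subgradient of such a function at a point $\boldsymbol{\eta}$ as the $\boldsymbol{\eta}$-gradient of whichever affine function attains the maximum there, evaluated at the corresponding maximizer. I would carry this out by verifying the defining subgradient inequality directly, rather than invoking the result abstractly.

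First I would let $\left(\Gamma_{s,f}^{n\ast},\tilde{P}_{s,f}^{n\ast},y_{s,f}^{\ast}\right)$ denote a maximizer of $\mathcal{L}$ over the feasible set given by C1, C2, C4--C7 at the given $\boldsymbol{\eta}$, so that $g(\boldsymbol{\eta})=\mathcal{L}\left(\Gamma_{s,f}^{n\ast},\tilde{P}_{s,f}^{n\ast},y_{s,f}^{\ast},\boldsymbol{\eta}\right)$ by definition. Inspecting (\ref{lag_femto_pb}), the only term depending on $\boldsymbol{\eta}$ is the dualized interference term, and the coefficient multiplying $\eta^{n}$ is exactly $\Gamma_{B,m}^{n}I_{m}^{n}-\Gamma_{B,m}^{n}\left(\sum_{s=1}^{S}\sum_{f\in\mathcal{F}_s}\tilde{P}_{s,f}^{n}g_{s,m}^{n}\right)$. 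Evaluated at the maximizer, this coefficient is precisely $d^{n}$ as in (\ref{subgrad}). Hence, for this fixed maximizer and any other $\boldsymbol{\eta}'$, the affine dependence yields $\mathcal{L}\left(\cdot,\boldsymbol{\eta}'\right)-\mathcal{L}\left(\cdot,\boldsymbol{\eta}\right)=\sum_{n=1}^{N}\left(\eta'^{n}-\eta^{n}\right)d^{n}$.

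Next I would combine two observations. Since the maximizer $\left(\Gamma_{s,f}^{n\ast},\tilde{P}_{s,f}^{n\ast},y_{s,f}^{\ast}\right)$ remains feasible (though not necessarily optimal) for the maximization defining $g(\boldsymbol{\eta}')$, we have $g(\boldsymbol{\eta}')\geq\mathcal{L}\left(\Gamma_{s,f}^{n\ast},\tilde{P}_{s,f}^{n\ast},y_{s,f}^{\ast},\boldsymbol{\eta}'\right)$ for every $\boldsymbol{\eta}'$. Using the affine identity above together with $\mathcal{L}\left(\cdot,\boldsymbol{\eta}\right)=g(\boldsymbol{\eta})$, the right-hand side equals $g(\boldsymbol{\eta})+\sum_{n=1}^{N}\left(\eta'^{n}-\eta^{n}\right)d^{n}$. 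Chaining these gives $g(\boldsymbol{\eta}')\geq g(\boldsymbol{\eta})+\boldsymbol{d}^{\top}\left(\boldsymbol{\eta}'-\boldsymbol{\eta}\right)$, which is exactly the inequality defining $\boldsymbol{d}$ to be a subgradient of the convex function $g$ at $\boldsymbol{\eta}$, thereby establishing the claim.

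The main obstacle lies less in the algebra than in stating the setup cleanly: one must ensure the maximum defining $g(\boldsymbol{\eta})$ is attained so that the maximizer $\tilde{P}_{s,f}^{n\ast}$ exists and $d^{n}$ is well defined, and one should note that this maximizer need not be unique, which is precisely why $g$ may fail to be differentiable and we obtain a subgradient rather than a gradient. Beyond that, this is the textbook derivation of a subgradient for a Lagrange dual function, and the decomposition of $\mathcal{L}$ into the per-small-cell subproblems in (\ref{lagdulfns}) does not affect the argument, since the dualized constraint C3 is the sole source of $\boldsymbol{\eta}$-dependence.
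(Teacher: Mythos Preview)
Your proof is correct and follows essentially the same approach as the paper: both use that the maximizer at $\boldsymbol{\eta}$ remains feasible for the inner maximization at any other dual point $\boldsymbol{\eta}'$, and that the Lagrangian is affine in $\boldsymbol{\eta}$, yielding $g(\boldsymbol{\eta}')\geq \mathcal{L}(\cdot,\boldsymbol{\eta}')=g(\boldsymbol{\eta})+\sum_{n}(\eta'^{n}-\eta^{n})d^{n}$. Your version is simply more explicit about the affine structure and the attainment caveat, while the paper compresses the argument into a single displayed chain of inequalities.
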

where $d^{n}$ is an element of $\boldsymbol{d}$ and $\Gamma _{s,f}^{n\ast }, \tilde{P}_{s,f}^{n\ast }$, and $y_{s,f}^{\ast }$ optimize the maximization problem in the definition of $g(\boldsymbol{\eta})$.

\begin{proof}
For any $\boldsymbol{\xi}\geq 0$,
\begin{align}
&g(\boldsymbol{\xi} )\geq \mathcal{L}(\Gamma _{s,f}^{n\ast }, \tilde{P}_{s,f}^{n\ast }, y_{s,f}^{\ast }, \boldsymbol{\xi})\nonumber \\
&=g(\boldsymbol{\eta})+\sum_{n=1}^{N}\left(\xi^{n}-\eta^{n} \right )\left [\vphantom{\left(\sum_{s=1}^{S}\sum_{f \in \mathcal{F}_s}^{ }\tilde{P}_{s,f}^{n\ast }g_{s,m}^{n} \right )}\Gamma_{B,m}^{n}I_{m}^{n} \right. \nonumber \\
&\left.-\Gamma_{B,m}^{n}\left(\sum_{s=1}^{S}\sum_{f \in \mathcal{F}_s}^{ }\tilde{P}_{s,f}^{n\ast }g_{s,m}^{n} \right )  \right ]. \nonumber 
\end{align} 
\end{proof}

\textbf{Algorithm \ref{Alg_1}} gives a practical implementation of the distributed joint RA and AC operation for the small cells. After the macrocell has performed RA for its MUEs, it sends the sub-channel allocation information for its MUEs and the initialized multiplier $\boldsymbol{\eta}$ to the HeNB-GW. For a given $\boldsymbol{\eta}$, all small cells solve their optimization problem in (\ref{lagdulfns}) simultaneously. For the given resource allocation in the small cells, the MUEs estimate the resulting interference levels and send them to the macrocell which updates the multiplier values  using the ellipsoid method. The macrocell then informs  the updated multiplier values to the HeNB-GW, which broadcasts them to the small cells, and the entire operation repeats. Note that the small cells can obtain the channel gains $g_{s,m}^{n}$ relying on the techniques proposed in \cite{Kyuho2011}.

Finally, the remaining issue is to obtain a feasible primal solution to (\ref{femto_pbcnvx}) based on the resulting solution from the Lagrangian dual in (\ref{lagdualpb}). It has been reported in \cite{Boyd2007decomp} and \cite{Komodakis2011} that the iterations of the dual decomposition method are, in general, infeasible with respect to (\ref{femto_pbcnvx}). This infeasibility, however, is not severe as large constraint violations usually get penalized. Hence, using a simple procedure, one can recover a primal feasible solution that serves as a lower bound for the optimal solution of (\ref{femto_pbcnvx}). Suppose that the reported interference level by an MUE $m$ allocated a sub-channel $n$ was found to be:


\begin{align}
&\sum_{s=1}^{S} \sum_{f \in \mathcal{F}_s}^{ } \tilde{P}_{s,f}^{n}g_{s,m}^{n}  = \alpha^{n}  I_{m}^{n},~ \alpha^{n} > 1.
\end{align}

A straightforward way to recover feasibility is for the HeNB-GW to instruct all small cells transmitting on sub-channel $n$ to scale down their transmission powers by the factor $\alpha^{n}$. For the updated power values, the entire problem is solved to obtain the updated values of sub-channel allocation and admission control variables. The gap between the lower bound offered by this procedure and the upper bound offered by (\ref{lagdualpb}), referred to as the duality gap,  diminishes with iterations. Convergence to the optimal solution is guaranteed since the primal optimization problem in (\ref{femto_pbcnvx}) is convex.

\begin{algorithm}
\caption{Distributed joint RA and AC algorithm}
\begin{algorithmic}[1]
\STATE Macrocell initializes $\boldsymbol{\eta}$, $L_{max}$, sends sub-channel allocation information $\Gamma_{B,m}^{n}$ and $\boldsymbol{\eta}$ to HeNB-GW and sets iteration counter $l=1$
\STATE HeNB-GW broadcasts $\Gamma_{B,m}^{n}$ and $\boldsymbol{\eta}$ values to all small cells
\REPEAT
\FOR{$s=1:S$} 
\STATE All small cells solve (\ref{lagdulfns}) in parallel
\ENDFOR
\STATE All MUEs estimate interference levels on allocated sub-channels and report them to the macrocell
\STATE Macrocell evaluates the sub-gradient (\ref{subgrad}) and updates $\boldsymbol{\eta}$ using the ellipsoid method
\STATE Macrocell sends updated $\boldsymbol{\eta}$ to HeNB-GW
\STATE HeNB-GW broadcasts updated $\boldsymbol{\eta}$ to all small cells
\STATE Macrocell sets $l=l+1$
\UNTIL{Convergence or $l=L_{max}$}
\end{algorithmic}
 \label{Alg_1}
\end{algorithm}

\section{Numerical Results and Discussions}
\label{sec:numres}

\subsection{Parameters}

We evaluate the system performance through extensive simulations under various topologies and scenarios. We have a macrocell located at the origin with radius $300$ m. A hot spot of small cells exists at a distance of $100$ m from the macrocell. The MUEs exist outdoor in this hot spot and are served by the macrocell. Each small cell has $2$ indoor SUEs located randomly on a circular disc around the small cell with an inner radius of $3$ m and an outer radius of $10$ m \cite{3gpp.36.814}. The macrocell has a total power budget of $P_{B, max}=20$ W. 

To model the propagation environment, the channel models from \cite{3gpp.36.814} are used. The channel gains include path-loss, log-normal shadowing, and multipath Rayleigh fading. The path-loss between a small cell and its served SUE,  $PL=38.46+20\log R$ and the path-loss between a small cell and the outdoor MUEs,  $PL=\max(38.46+20\log R, 15.3+37.6\log R)+L_{ow}$, where $R$ is the distance between a small cell and the UE and $L_{ow}$ accounts for losses due to walls. For path-loss between the macrocell and an SUE existing indoor, $PL=15.3+37.6\log R+L_{ow}$ and for path-loss between the macrocell and its MUE, $PL=15.3+37.6\log R$. We have the following values for the standard deviation of log-normal shadowing: $4$ dB for shadowing between SUE and its small cell, $8$ dB for shadowing between MUE and small cell and $10$ dB for shadowing between macrocell and SUE or MUE. The Rayleigh fading gain is modeled as an exponential random variable with unit mean. We assume $\Delta f=180$ KHz, $\epsilon=\frac{0.9}{1+SN}$, and noise power, $N_{o}=10^{-13}$ W. $I_{max}$ is set to any arbitrary large number. All the rate requirements in the numerical results are specified in terms of spectral efficiency (bps/Hz).

In the numerical results, the following performance metrics are used:

\begin{itemize}

\item Average percentage of admitted SUEs  $=\frac{\sum_{s=1}^{S}\sum_{f \in \mathcal{F}_s}^{ }y_{s,f}}{F}\times 100$.

\item Average percentage of channel usage $=\frac{\sum_{s=1}^{S}\sum_{f \in \mathcal{F}_s}^{ }\sum_{n=1}^{N}\Gamma _{s,f}^{n}}{SN}\times 100$.

\end{itemize}

\subsection{Numerical Results}

\subsubsection{Comparison between the traditional and the proposed RA method for macrocell}

In this section, we compare  the two proposed schemes for RA in the macrocell, namely, the formulation in (\ref{macro_pb3}), which we refer to as ``proposed" and the formulation in (\ref{macro_pb5}), which we refer to as ``traditional". Fig. \ref{gains} shows the channel gain realizations for a snapshot of 3 MUEs with 10 sub-channels.

\begin{figure}[th]
\begin{center}
\includegraphics[width=3.3 in]{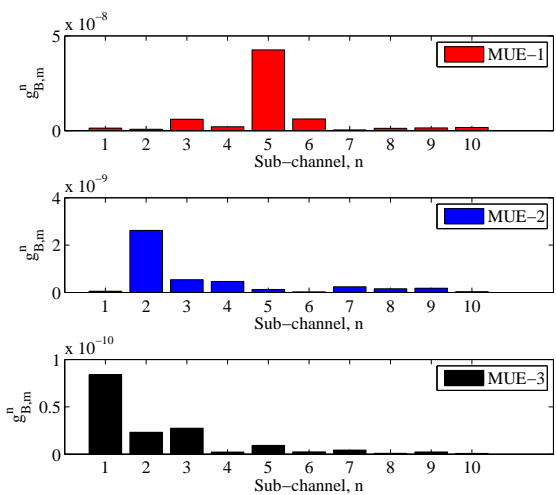}
\caption [c]{Channel gains $g_{B,m}^{n}$ for MUEs $\{1,2,3\}$.}
\label{gains}
\end{center}
\end{figure}

Figs. \ref{allocn_trad}-\ref{allocn_prop} compare the two RA results for the given snapshot with $R_m=5$ bps/Hz. Each figure shows the allocated power $P_{B,m}^{n}$ by the macrocell and the maximum tolerable interference level $I_{m}^{n}$ on allocated sub-channel $n$ to MUE $m$. No values for power $P_{B,m}^{n}$ on the $x$-axis indicate unallocated sub-channel with the corresponding value for $I_{m}^{n}$ set to $ I_{max}$, which means that this sub-channel can be used by the small cell tier  unconditionally. For further clarification, Table \ref{table2} shows the absolute values of $P_{B,m}^{n}$ and $I_{m}^{n}$.

\begin{figure}[th]
\begin{center}
\includegraphics[width=3.3 in]{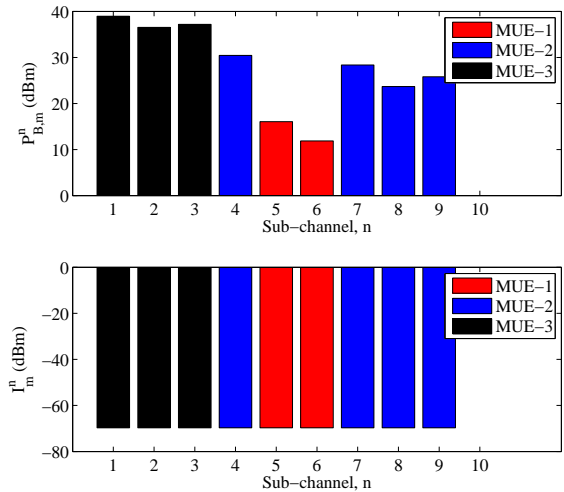}
\caption [c]{Allocated power $P_{B,m}^{n}$ and maximum tolerable interference level $I_{m}^{n}$ for MUEs $\{1,2,3\}$ using the traditional scheme.}
\label{allocn_trad}
\end{center}
\end{figure}

\begin{figure}[th]
\begin{center}
\includegraphics[width=3.3 in]{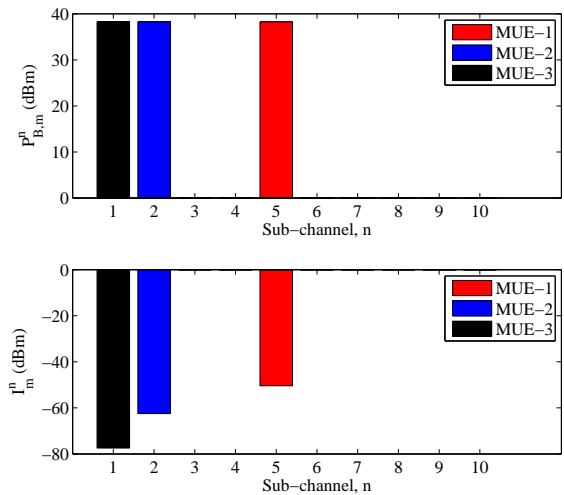}
\caption [c]{Allocated power $P_{B,m}^{n}$ and maximum tolerable interference level $I_{m}^{n}$ for MUEs $\{1,2,3\}$ using the proposed scheme.}
\label{allocn_prop}
\end{center}
\end{figure}

It is clear from Fig. \ref{allocn_trad} that most of the sub-channels are allocated to the MUEs (9 sub-channels out of 10 are allocated to the 3 MUEs), when using the traditional scheme for RA. We notice also that the macrocell favors good sub-channels as they require less transmit power to achieve the rate requirements for the MUEs, leading at the end to minimum transmit power requirements.

Fig. \ref{allocn_prop}, on the other hand, shows that the 3 MUEs require only 3 sub-channels to achieve their rate requirements, as was proved before, using the proposed scheme. Again, the macrocell allocates the best sub-channels to the MUEs. From Figs. \ref{allocn_trad}-\ref{allocn_prop} and Table \ref{table2}, we can notice that the entire power budget of macro BS, $P_{B, max}$, is used in both cases. It is worth mentioning that when we use the traditional scheme for macrocell resource allocation, it does not necessarily mean that it will consume less power than the proposed scheme, since the maximum tolerable interference level $I_{m}^{n}$ is adjusted according to \textbf{Algorithm \ref{Alg_2}} by the macrocell to use the entire power budget. It rather means that, given the maximum tolerable interference levels, the resulting sub-channel and power allocation for the traditional scheme will consume the minimum power and any other allocation will consume a higher power. 

Now, for the maximum tolerable interference levels $I_{m}^{n}$, it is obvious from Figs. \ref{allocn_trad}-\ref{allocn_prop} and Table \ref{table2} that the proposed scheme can sustain higher interference levels from the small cell tier. For the traditional scheme, the sum of the tolerable interference  is given by: $(9 \times 1.0709 \times 10^{-10}) + I_{max}$. On the other hand, the sum of the tolerable interference levels for the proposed scheme can be given by: $(0.1796\times 10^{-10}) + (5.6298\times 10^{-10}) + (91.604\times 10^{-10}) + (7 \times I_{max})$. It is of interest to compare the effect of the two different RA schemes for the macrocell on the small cell tier.


\begin{table*}[th]
\centering
\caption{Absolute Values of $P_{B,m}^{n}$ and $I_{m}^{n}$ for the Traditional and Proposed Macrocell RA Schemes}
\scriptsize
\begin{tabular}{c c c *{10}{|c}|}
\cline{4-13}
& & & \multicolumn{10}{|c|}{Sub-channel\#} \\
\cline{4-13}
& & & 1 & 2 & 3 & 4 & 5 & 6 & 7 & 8 & 9 & 10 \\
\hline \hline
\multicolumn{1}{ |c| }{\multirow{6}{*}{Traditional }} & \multirow{3}{*}{$P_{B,m}^{n}$ (W)}  & \multicolumn{1}{ |c| }{MUE-1} & 0                &	0 &	0 &	0	& 0.0402 & 0.0153 &	0 &	0 &	0 &	0 \\
\cline{3-13}
\multicolumn{1}{ |c| }{\multirow{6}{*}{ scheme}}      &                                                                & \multicolumn{1}{ |c| }{MUE-2} & 0               &	0 &	0 &	1.1097	& 0 &	0 &	0.6855 &	0.2333 & 0.3805 &	0  \\
\cline{3-13}
\multicolumn{1}{ |c| }{}					&						  & \multicolumn{1}{ |c| }{MUE-3} & 7.8381       &	4.4890 &	5.2142 &	0 &	0 &	0 &	0 &	0 &	0 &	0 \\
\cline{2-13}
\multicolumn{1}{ |c| }{}					& \multirow{3}{*}{$I_{m}^{n}$ (W)}      & \multicolumn{1}{ |c| }{MUE-1} & $I_{max}$ &	$I_{max}$ &	$I_{max}$ &	$I_{max}$ &	1.0709 &	1.0709 &	$I_{max}$ &	$I_{max}$ &	$I_{max}$ &	$I_{max}$ \\
\cline{3-13}
\multicolumn{1}{ |c| }{}					& \multirow{3}{*}{$\times 10^{-10}$}     & \multicolumn{1}{ |c| }{MUE-2} & $I_{max}$ &	$I_{max}$ &	$I_{max}$ &	1.0709 &	$I_{max}$ &	$I_{max}$ &	1.0709 &	1.0709 &	1.0709 &	$I_{max}$ \\
\cline{3-13}
\multicolumn{1}{ |c| }{}                			&						   & \multicolumn{1}{ |c| }{MUE-3} & 1.0709       &	1.0709 &	1.0709 &	$I_{max}$ &	$I_{max}$ &	$I_{max}$ &	$I_{max}$ &	$I_{max}$ &	$I_{max}$ &	$I_{max}$ \\
\hline \hline
\multicolumn{1}{ |c| }{\multirow{6}{*}{Proposed }}  & \multirow{3}{*}{$P_{B,m}^{n}$ (W)}   & \multicolumn{1}{ |c| }{MUE-1} & 0                 &	0 &	0 &	0 &	6.6667 &	0 &	0 &	0 &	0 &	0\\
\cline{3-13}
\multicolumn{1}{ |c| }{\multirow{6}{*}{ scheme}}     &						   & \multicolumn{1}{ |c| }{MUE-2} & 0                &	6.6667 &	0 &	0 &	0 &	0 &	0 &	0 &	0 &	0 \\
\cline{3-13}
\multicolumn{1}{ |c| }{}				          &						   & \multicolumn{1}{ |c| }{MUE-3} & 6.6667       &	0 &	0 &	0 &	0 &	0 &	0 &	0 &	0 &	0 \\
\cline{2-13}
\multicolumn{1}{ |c| }{}				          & \multirow{3}{*}{$I_{m}^{n}$ (W)}        & \multicolumn{1}{ |c| }{MUE-1} & $I_{max}$ &	$I_{max}$ &	$I_{max}$ &	$I_{max}$ &	91.604 &	$I_{max}$ &	$I_{max}$ &	$I_{max}$ &	$I_{max}$ &	$I_{max}$ \\
\cline{3-13}
\multicolumn{1}{ |c| }{}					&  \multirow{3}{*}{$\times 10^{-10}$}     & \multicolumn{1}{ |c| }{MUE-2} & $I_{max}$ &	5.6298 &	$I_{max}$ &	$I_{max}$ &	$I_{max}$ &	$I_{max}$ &	$I_{max}$ &	$I_{max}$ &	$I_{max}$ &	$I_{max}$ \\
\cline{3-13}
\multicolumn{1}{ |c| }{}					&						    & \multicolumn{1}{ |c| }{MUE-3} & 0.1796       &	$I_{max}$ &	$I_{max}$ & 	$I_{max}$	& $I_{max}$ &	$I_{max}$ &	$I_{max}$ &	$I_{max}$ &	$I_{max}$ &	$I_{max}$  \\
\hline
\end{tabular}
\label{table2}
\end{table*}

Fig. \ref{AdmittedvsMUEs} compares the average percentage of admitted SUEs when the macrocell performs RA  in order to minimize the sum-power  (as in (\ref{macro_pb5})) and  maximize the sum of tolerable interference levels  (as in (\ref{macro_pb3})) with two different wall loss scenarios. We have the following scenario: $2$ small cells located at $(-10, -100), (10, -100)$ in a square area hot spot of dimensions $20\times20$~m$^2$, $10$ sub-channels, $P_{s, max}=30$ mW, $R_f=50$ bps/Hz, and $R_m=5$ bps/Hz. Numerical results are obtained and averaged for $50$ different realizations, where in each realization, the UE positions and the channel gains are varied. The small cell problem is solved centrally using the convex formulation in (\ref{femto_pbcnvx}). It is clear from the figure that the proposed RA method for the macrocell outperforms the traditional one. When the macrocell performs RA according to the proposed method, it consumes the minimum bandwidth, and therefore, frees as many sub-channels as possible for the small cells. On the other hand, the traditional method consumes more bandwidth than the proposed one, hence, the small cells have more interference constraints to abide by. We also notice that as the wall losses increase, the small cells tend to be more isolated and the impact of resource allocation in the macrocell on the small cell performance is low.

\begin{figure}[th]
\begin{center}
\includegraphics[width=3.3 in]{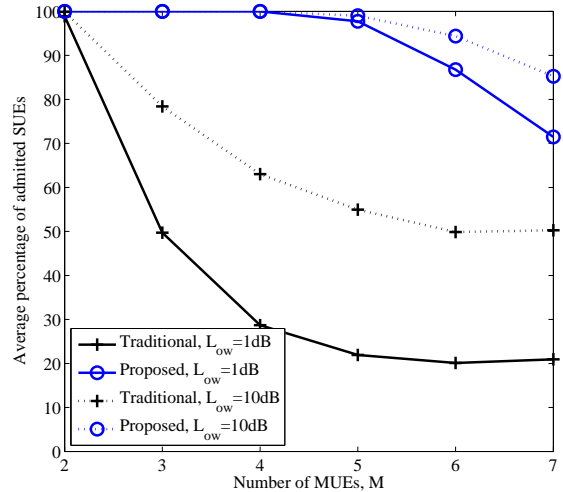}
\caption [c]{Average percentage of admitted SUEs vs. number of MUEs $M$ when the macrocell employs both the proposed and the traditional methods for RA with different wall loss scenarios.}
\label{AdmittedvsMUEs}
\end{center}
\end{figure}

\subsubsection{Comparison between the different formulations for the RA problem for small cells}

Fig. \ref{obj_schemes} compares the values of the objective function for the MINLP formulation in (\ref{femto_pb1}), the centralized convex formulation in (\ref{femto_pbcnvx}), and the distributed formulation in (\ref{lag_femto_pb}) for a snapshot of the following scenario: $2$ small cells located at $(-10, -100), (10, -100)$ in a square area hot spot of dimensions $20\times20$~m$^2$, $3$ sub-channels, $3$ MUEs, $P_{s, max}=30$ mW, $L_{ow}=$ 1 dB, and $R_f=5$ bps/Hz. As was stated previously, the convex formulation provides an upper bound for the solution of the MINLP formulation. Also, we notice that the centralized and distributed formulations have the same solution due to the convexity of the centralized formulation in  (\ref{femto_pbcnvx}). It is worth mentioning that the convex formulation exhibits a  behavior similar to the MINLP formulation. Hence, solving the convex formulations reveals insights into the behavior of the solution of the MINLP formulation. We also notice  that as $R_m$ increases, the  interference constraints for the MUEs become tighter. Hence, the average number of admitted SUEs decreases. Since the objective function in our formulation gives more priority to admission control, the value of objective function decreases with increasing $R_m$.

\begin{figure}[th]
\begin{center}
\includegraphics[width=3.3 in]{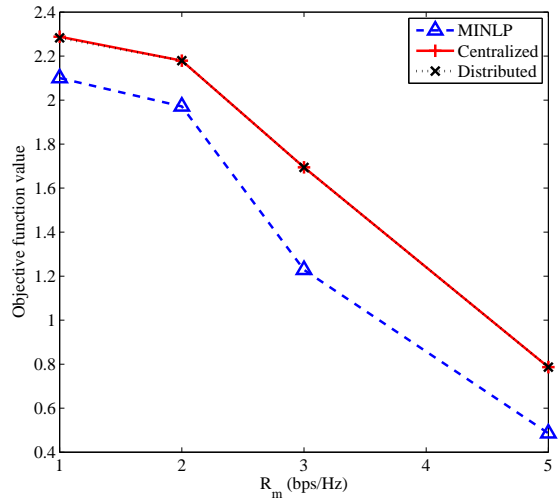}
\caption [c]{The  values of objective function for different formulations vs. $R_m$.}
\label{obj_schemes}
\end{center}
\end{figure}

\begin{figure}[th]
\begin{center}
\includegraphics[width=3.3 in]{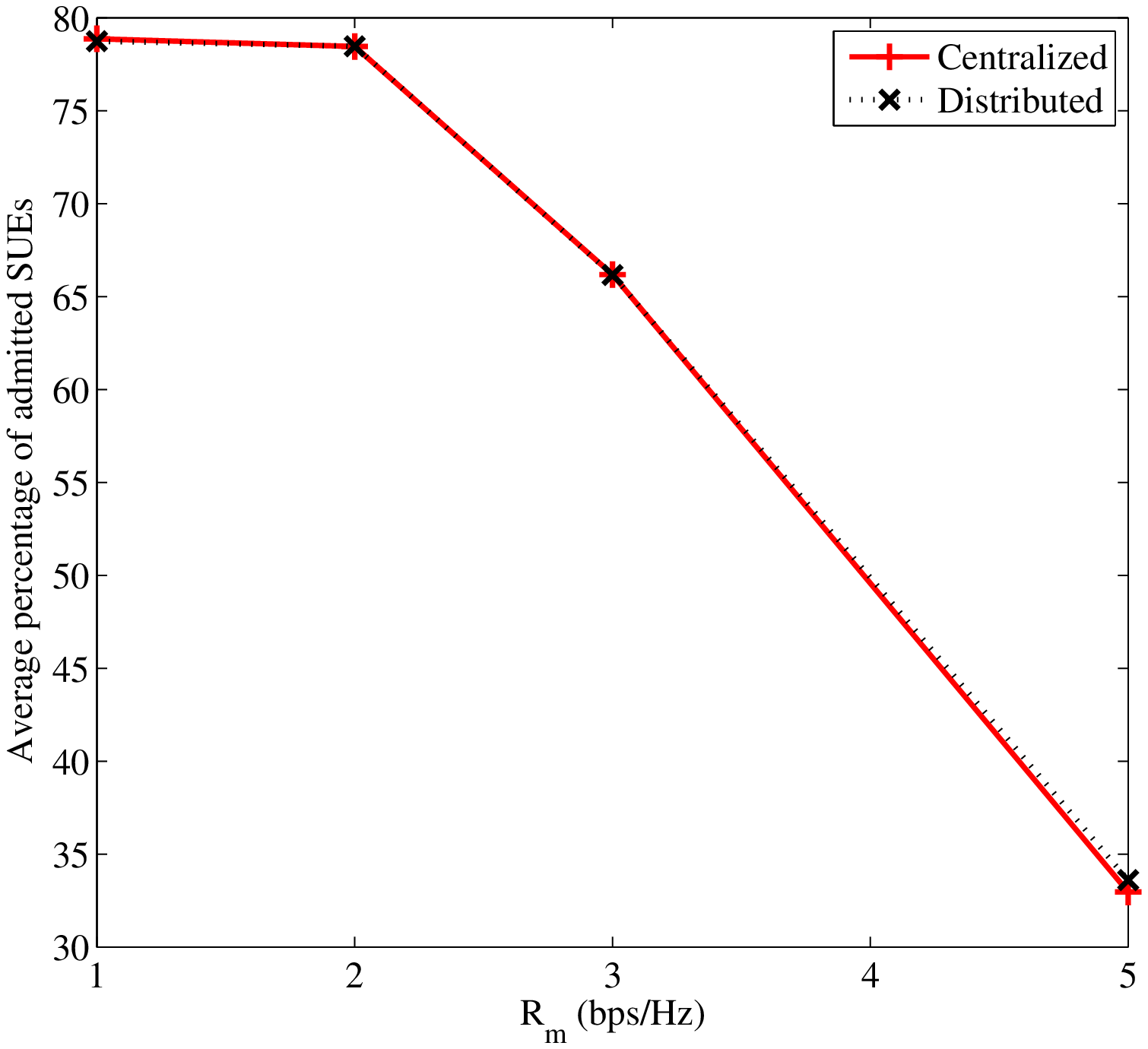}
\caption [c]{Average percentage of admitted SUEs vs. $R_m$.}
\label{admittedRm}
\end{center}
\end{figure}

\begin{figure}[th]
\begin{center}
\includegraphics[width=3.3 in]{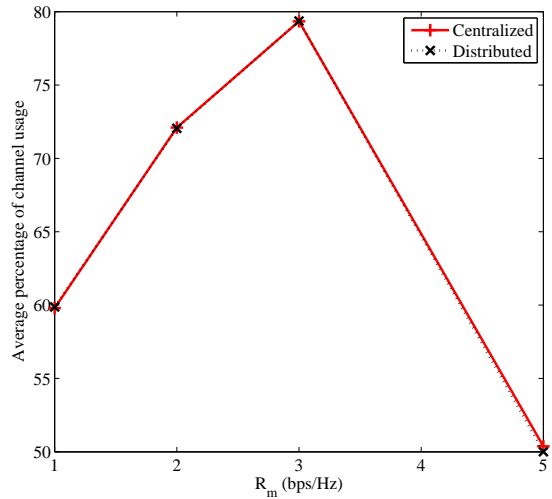}
\caption [c]{Average percentage of channel usage vs. $R_m$.}
\label{chusageRm}
\end{center}
\end{figure}

Figs. \ref{admittedRm}-\ref{chusageRm} show the average percentage of admitted SUEs and channel usage in a small cell vs. $R_m$ for the same scenario considered in Fig. \ref{obj_schemes}. As was discussed in Fig. \ref{obj_schemes}, as the rate requirements for the MUEs increase, they have tighter interference constraints. Hence, the percentage of admitted SUEs generally decreases. We notice in Fig. \ref{admittedRm} that, initially, the average percentage of admitted SUEs is almost constant due to the increased number of used sub-channels as shown in Fig. \ref{chusageRm}. As the MUEs' rate requirements increase further, the increase in the number of used sub-channels is not enough to accommodate the rate requirements of the SUEs, hence, the average percentage of admitted SUEs decreases. 

\subsubsection{Convergence behavior}

Using the same scenario described for the previous figure, Fig. \ref{convergence} shows the convergence behavior of \textbf{Algorithm \ref{Alg_1}}, where the upper bound refers to (\ref{lagdualpb}) and the lower bound refers to the feasible objective obtained by the procedure at the end of Section \ref{sec:DO}. In the figure, the best lower bound is obtained by keeping track of the best primal feasible objective resulting through iterations. It is clear that \textbf{Algorithm \ref{Alg_1}} converges to the optimal solution of (\ref{femto_pbcnvx}) within a few iterations.

\begin{figure}[th]
\begin{center}
\includegraphics[width=3.3 in]{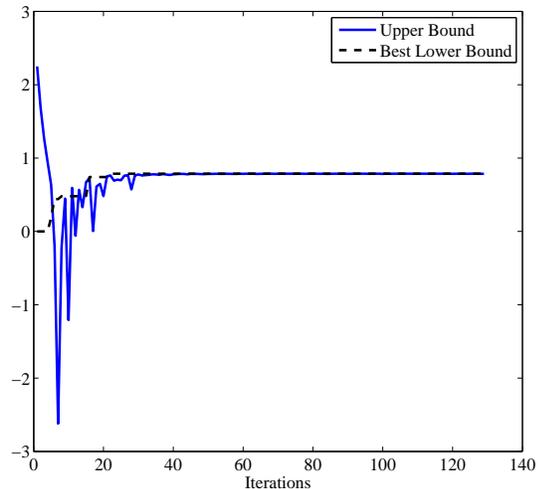}
\caption [c]{Convergence of \textbf{Algorithm \ref{Alg_1}}.}
\label{convergence}
\end{center}
\end{figure}

\subsubsection{Average percentage of admitted SUEs  vs. $R_f$}

In this scenario, we have the following setup: $5$ small cells located at $(-20, -100), (-20, -140), (20, -140), (20, -100), (0, -120)$ in a square area hot spot of dimensions $40\times40$~m$^2$, $5$ sub-channels, $5$ MUEs, $P_{s, max}=30$ mW, $L_{ow}=$ 1 dB and $R_m=4$ bps/Hz. Numerical results are obtained and averaged for $50$ different realizations, where in each realization, the UE positions and channel gains are varied. Fig. \ref{admittedRf} shows the average percentage of admitted SUEs  vs. $R_{f}$. We notice that, generally, as the rate requirement increases, more SUEs are in outage. We also notice that the distributed scheme converges approximately to the same solution as the centralized solution.

\begin{figure}[th]
\begin{center}
\includegraphics[width=3.3 in]{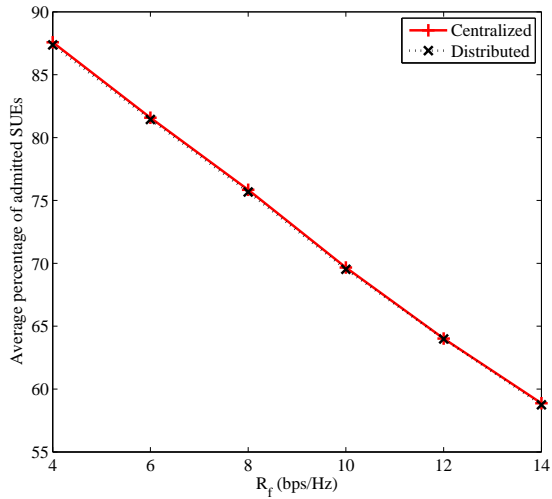}
\caption [c]{Average percentage of admitted SUEs vs. $R_f$.}
\label{admittedRf}
\end{center}
\end{figure}

\subsubsection{Average percentage of admitted SUEs vs. $P_{s, max}$}

We have the same setup as the one for the previous figure except for $R_f=10$ bps/Hz. Fig. \ref{admitvspsmax} shows the average percentage of admitted SUEs  vs. $P_{s, max}$. We notice that as the maximum transmit power of the small cells increases, the average number of admitted SUEs increases. This rate of increase, however, is not fixed as the system is limited by the interference constraints for MUEs.

\begin{figure}[th]
\begin{center}
\includegraphics[width=3.3 in]{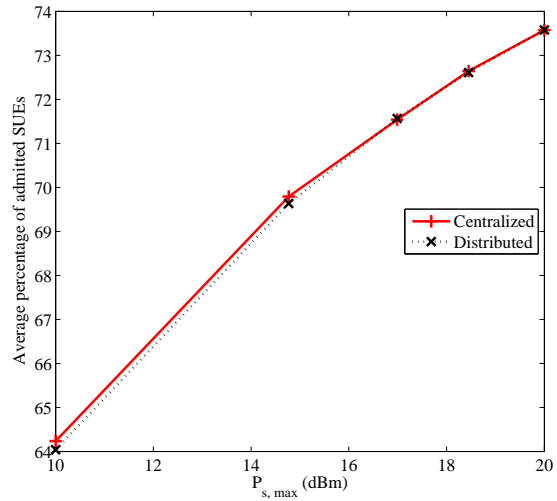}
\caption [c]{Average percentage of admitted SUEs vs. $P_{s, max}$.}
\label{admitvspsmax}
\end{center}
\end{figure}

\subsection{Summary of Major Observations}

The major observations from the numerical analysis can be summarized as follows:

\begin{itemize}

\item In a multi-tier network, it is critical to consider the impact of RA decisions in one tier on the other one. For the macrocell network, as different RA schemes are used to achieve the same rate requirements for the MUEs, they affect the performance of the small cell tier differently.

\item Additional network tiers can be accommodated by minimizing the used bandwidth. The proposed problem formulation for resource allocation in the macrocell leads to a minimal use of the system bandwidth which allows to admit a higher number of SUEs. 

\item It is foreseen that, employing other schemes, such as a one to ``maximize the sum-rate'', for either the macrocell tier or the small cell tier, is not a good option as it will not be able to maximally accommodate additional tiers. Similar to the ``minimize the total sum-power" scheme, the ``maximize the sum-rate'' scheme is known to consume most of the available bandwidth \cite{Kibeom2006}.

\item For a given macrocell RA policy, increasing the rate requirements for the MUEs degrades the performance of small cell tier in terms of the average number of admitted SUEs.

\item By exploiting the time sharing property (i.e., the UEs time share the sub-channels when served by the small cells), a convex optimization formulation can be developed for the RA and AC problem for the small cells. This convex formulation enables us to solve the problem efficiently in a distributed fashion. The distributed algorithm for resource allocation for the small cell tier converges to the same solution as the centralized solution.

\item If the deployment of the small cells is such that they are well isolated, resource allocation in the macrocell might have very small effect on the performance of small cells.



\end{itemize}

\section{Conclusion}
\label{sec:conc}

We have proposed a complete framework for the resource allocation and admission control problem in a two-tier OFDMA cellular network.  Different optimization problems with new objectives have been formulated for the macrocell tier and the small cell tier. The macrocell tier aims at allocating resources to its MUEs in a way that can tolerate the maximum possible interference from the small cell tier. This problem has been shown to be an MINLP. However, we have proved that the macrocell can solve another alternate optimization problem that yields the optimal solution in polynomial time. Now, given the interference constraints for the MUEs, the small cells perform resource allocation and admission control with the objective of maximizing the number of admitted SUEs and serving them with the minimum possible bandwidth. This problem has also been shown to be an MINLP. A convex relaxation has been used to study the behavior of the MINLP formulation. Since centralized solutions for resource allocation are not practical for dense networks, a distributed solution for resource allocation and admission control has been proposed using dual decomposition technique and has been shown to converge to the same solution as the centralized one. Numerical results have shown the significane of tier-aware resource allocation methods. 


\bibliographystyle{ieeetr}

\end{document}